\setlist[enumerate]{leftmargin=0.7cm,topsep=0.5mm}
\setlist[itemize]{leftmargin=0.5cm,topsep=0.5mm}
\definecolor{RedColor}{HTML}{e66101}
\definecolor{OrangeColor}{HTML}{fdb863}
\definecolor{YellowColor}{HTML}{fec44f}
\definecolor{BlueColor}{HTML}{b2abd2}
\definecolor{PurpleColor}{HTML}{5e3c99}
\definecolor{Ours0Color}{HTML}{ABDDA4}
\definecolor{Ours16Color}{HTML}{72C166}
\definecolor{Ours32Color}{HTML}{38A528}
\definecolor{PrestoColor}{HTML}{999999}
\definecolor{PostgresColor}{HTML}{6D6D6D}
\definecolor{GreenColor}{HTML}{38A528}
\definecolor{YellowColor}{HTML}{ffb570}
\definecolor{BlueColor}{HTML}{7081ff}
\definecolor{PinkColor}{HTML}{ffb0c2}
\definecolor{ComputeColor}{HTML}{ffb0c2}
\definecolor{ReadColor}{HTML}{cf3457}
\definecolor{WriteColor}{HTML}{ffb570}
\definecolor{vintagegreen}{HTML}{ABDDA4}
\definecolor{OursColor}{HTML}{38A528}
\definecolor{GreedyColor}{HTML}{7081ff}
\definecolor{RandomColor}{HTML}{ffb570}
\definecolor{NoneColor}{HTML}{6D6D6D}
\definecolor{Redborder}{HTML}{805861}
\definecolor{Greenborder}{HTML}{384180}
\definecolor{Blueborder}{HTML}{566F52}
\definecolor{Greyborder}{HTML}{4D4D4D}
\definecolor{Lightgrey}{HTML}{dadada}
\definecolor{ExampleColor1}{HTML}{7081ff}
\definecolor{ExampleColor2}{HTML}{ffb0c2}
\definecolor{Lightred}{HTML}{ffb09c}
\definecolor{Lightblue}{HTML}{b8e2f2}
\definecolor{FlagColor}{HTML}{CCCCCC}
\definecolor{vintageblue}{HTML}{7081ff}
\definecolor{vintagered}{HTML}{ffb0c2}
\definecolor{NoOptColor}{HTML}{264653}
\definecolor{LRUColor}{HTML}{777777}
\definecolor{RandomColor}{HTML}{2a9d8f}
\definecolor{GreedyColor}{HTML}{e9c46a}
\definecolor{HeuristicColor}{HTML}{f4a261}
\definecolor{SCColor}{HTML}{e76f51}
\definecolor{AllColor}{HTML}{CCCCCC}
\definecolor{SAColor}{HTML}{ffb0c2}
\definecolor{SeparatorColor}{HTML}{9b5de5}
\definecolor{BlueColor}{HTML}{0081a7}
\definecolor{cAmain}{HTML}{e66101}
\colorlet{cAlight}{cAmain!25}
\colorlet{cAlightlight}{cAmain!5}
\definecolor{cBmain}{HTML}{5e3c99}
\colorlet{cBlight}{cBmain!25}
\definecolor{cCmain}{HTML}{b2abd2}
\colorlet{cClight}{cCmain!25}
\definecolor{cDmain}{HTML}{fdb863}
\colorlet{cDlight}{cDmain!25}
\definecolor{cZmain}{HTML}{030303}  
\colorlet{cZlight}{cZmain!25}
\colorlet{cZlightlight}{cZmain!5}
\colorlet{cZlightlightlight}{cZmain!1}
\definecolor{cPositivemain}{HTML}{2ca02c}  
\colorlet{cPositivelight}{cPositivemain!25}
\definecolor{cNegativemain}{HTML}{d62728}  
\colorlet{cNegativelight}{cNegativemain!25}
\newcommand\system{Kishuboard\xspace}
\newcommand{\CellWithForcedBreak}[2][c]{%
    \begin{tabular}[#1]{@{}c@{}}#2\end{tabular}%
}
\newcommand{\fix}[1]{{#1}}
\apptocmd\@maketitle{{\myfigure{}\par}}{}{}
\begin{document}

\title{Enhancing Computational Notebooks with Code+Data Space Versioning}

\author{Hanxi Fang}
\affiliation{%
  \institution{University of Illinois Urbana-Champaign}
  \city{Urbana}
  \state{Illinois}
  \country{USA}
}
\email{hanxif2@illinois.edu}

\author{Supawit Chockchowwat}
\affiliation{%
  \institution{University of Illinois Urbana-Champaign}
  \city{Urbana}
  \state{Illinois}
  \country{USA}
}
\email{supawit2@illinois.edu}

\author{Hari Sundaram}
\affiliation{%
  \institution{University of Illinois Urbana-Champaign}
  \city{Urbana}
  \state{Illinois}
  \country{USA}
}
\email{hs1@illinois.edu}

\author{Yongjoo Park}
\affiliation{%
  \institution{University of Illinois Urbana-Champaign}
  \city{Urbana}
  \state{Illinois}
  \country{USA}
}
\email{yongjoo@illinois.edu}

\begin{abstract}



There is a gap between how people explore data and how Jupyter-like computational notebooks are designed. 
People explore data nonlinearly, using execution undos, branching, and/or complete reverts, whereas notebooks are designed for sequential exploration. 
Recent works like ForkIt are still insufficient 
    to support these multiple modes of nonlinear exploration in a unified way.

In this work,
    we address the challenge by introducing 
        two-dimensional code+data space versioning for computational notebooks
    and verifying its effectiveness using our prototype system, \system,
        which integrates with Jupyter.
By adjusting code and data knobs,
    users of \system can intuitively manage 
        the state of computational notebooks in a flexible way,
    thereby achieving both execution rollbacks and checkouts
        across complex multi-branch exploration history.
Moreover, this two-dimensional versioning mechanism
    can easily be presented along with a friendly one-dimensional history.
Human subject studies indicate that
    \system significantly enhances
        user productivity in various data science tasks.


\end{abstract}
\begin{CCSXML}
<ccs2012>
   <concept>
       <concept_id>10003120.10003121.10003129</concept_id>
       <concept_desc>Human-centered computing~Interactive systems and tools</concept_desc>
       <concept_significance>500</concept_significance>
       </concept>
 </ccs2012>
 <ccs2012>
   <concept>
       <concept_id>10003120.10003145.10003151</concept_id>
       <concept_desc>Human-centered computing~Visualization systems and tools</concept_desc>
       <concept_significance>500</concept_significance>
       </concept>
 </ccs2012>
 <ccs2012>
   <concept>
       <concept_id>10002951.10003152.10003520.10003184</concept_id>
       <concept_desc>Information systems~Version management</concept_desc>
       <concept_significance>500</concept_significance>
       </concept>
 </ccs2012>
\end{CCSXML}

\ccsdesc[500]{Human-centered computing~Interactive systems and tools}
\ccsdesc[500]{Human-centered computing~Visualization systems and tools}
\ccsdesc[500]{Information systems~Version management}
\keywords{computational notebooks, version control systems, code+data version control, notebook kernel state checkpoints,  interactive data science checkpoints, version control user interfaces}

\maketitle


\section{Introduction}
\label{sec:intro}


\begin{figure*}[t]
    \includegraphics[width=0.75\linewidth]{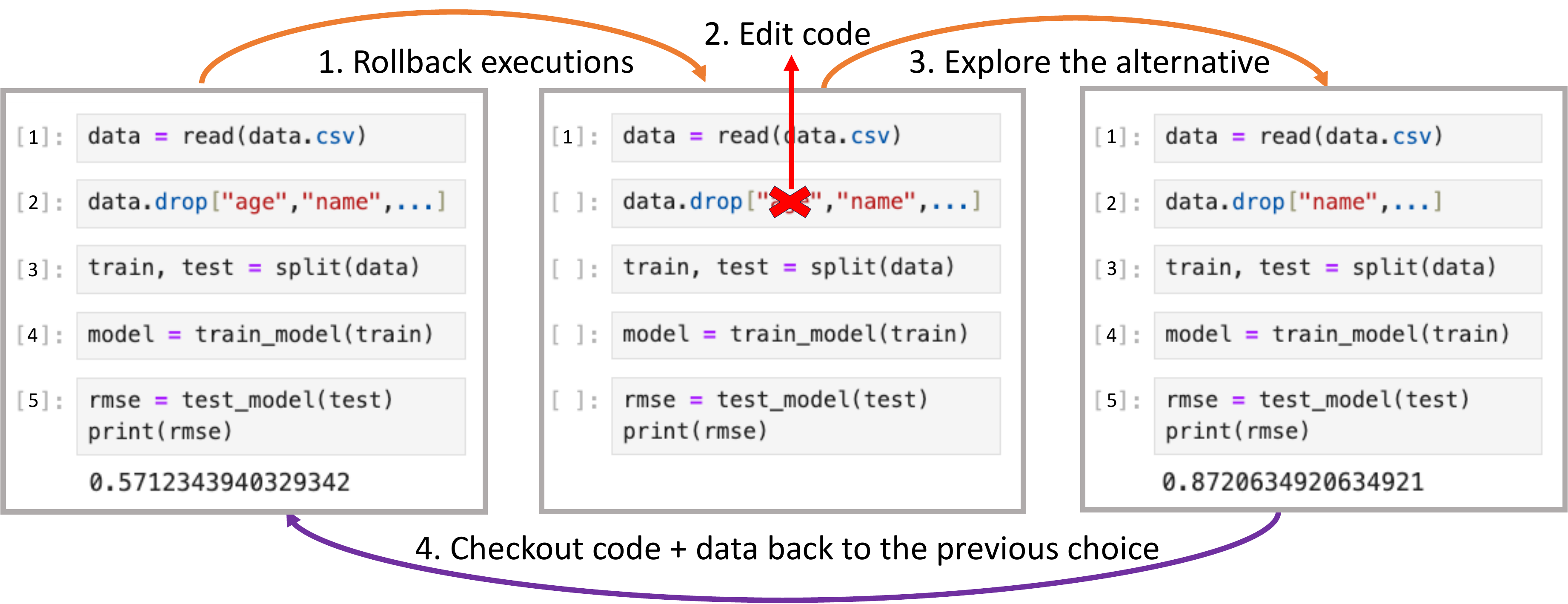}
    \vspace{-2mm}
    \Description{The figure illustrates the motivation behind Code+Data space versioning in data science workflows using a series of code cells in a notebook interface.
The first panel shows a sequence of five code cells, with each cell being executed and having an execution counter beside it. The counter is from 1 to 5. 
Then, an orange arrow connects the first panel to the middle panel. The arrow has the text "rollback executions" on it.
The middle panel represents a rolled-back version of the first panel, where the code is still there, but all the execution counters from 2 to 4 are wiped, meaning the executions are rolled back. Compared to the left panel, an edit is made in the second code cell.
The right panel shows the results after re-execution following the code edit, reflecting another exploration branch.
Finally, there's an arrow connecting the right panel back to the left one, with the text "Checkout Code + Data Back to the Previous Choice" on the arrow. : This demonstrates that data scientists "checkout" back to the previous exploration path after exploring the alternative.
The goal is to show diverse types of modifications in code and data states (rollback execution and checkout).}
    \caption{\fix{The motivation for code+data space versioning. 
    Data scientists often wish to undo (only) executions while keeping the code
        (i.e., \textbf{execution rollback}; left to middle)
        for testing alternative methods (e.g., edits in the middle followed by executions in the right);
    Or they wish to completely revert all of their activities,
        jumping back to a certain point in the past (i.e., \textbf{checkout}; right to left).
    Our goal is to enable these various types of code/data state modifications
        through an intuitive user interface.
    }}
    \label{fig:intro:motivation}
\end{figure*}


There is a significant gap 
    between how people explore data 
    and how computational notebooks (e.g., Jupyter~\cite{jupyter}, Colab~\cite{colab})
        are designed.
This discrepancy
    causes pain and wasted resources.
People explore data \emph{nonlinearly},
    while computational notebooks are designed for
        \emph{sequential} exploration only.
In Fig~\ref{fig:intro:motivation}, for example, 
Alice evaluates a model after dropping the \texttt{age} feature from the data (left).
For higher accuracy, she wishes to test another feature set.
\emph{If allowed,}
    she could \textbf{roll back} some of the executions (left to middle),
        edit the feature selection (middle),
        and re-run the training logic (right).
The new model shows lower accuracy.
Alice would end this alternative exploration
    by \textbf{checking out} both code and data\footnote{Checking out “data” refers to retrieving and restoring a previous version of runtime variable names, values, and execution history. Shortly, \S\ref{sec:version-set} defines ``data'' formally.} from an earlier version (right to left).
Unfortunately, none of the existing off-the-shelf computational notebooks---from 
    open-source tools (e.g., Jupyter~\cite{jupyter}, R Markdown~\cite{rmarkdown})
        to commercial services 
            (e.g., Google Colab~\cite{colab}, 
    Mode Analytics~\cite{modeanalytics})---allow
    this nonlinear workflow.
If they offer nonlinear exploration,
    it would completely reshape how data scientists
        explore data,
    while immediately addressing 
        the pressing needs
        shared by practitioners (\S\ref{sec:interview}).

Novel interfaces and systems are proposed to narrow this gap; however, they remain insufficient.
For example, ForkIt~\cite{weinman2021fork} allows for
    pursuing multiple exploration paths simultaneously with side-by-side columns; each column represents an independent path branched out
        from a common ancestor.
While it is a significant improvement over sequential exploration,
    ForkIt requires \emph{proactive} planning without the ability to 
        \emph{retrospectively} branch out and return.
    Moreover, it is unclear how users can manage more than a handful of exploration paths,
        each of which may recursively branch into more alternatives.
Approaches like Diff-in-the-Loop~\cite{wang2022diff}, \fix{Loops~\cite{eckelt2024loops}, and} Chameleon~\cite{hohman2020understanding}
    offer visual aids for comparing data between iterative explorations.
While these visualization techniques are helpful,
    they can be more successful when the underlying systems,
        like computational notebooks,
    are designed to support iterations through nonlinear exploration,
        which is lacking today.
Recording code execution order~\cite{head2019managing,kery2019towards,kery2018interactions,gitbook}
and highlighting branched code structures \fix{by offering non-linear visual cell layout~\cite{ramasamy2023visualising,subramanian2019supporting,venkatesan2022automatic,wang2022stickyland,kery2017variolite,harden2022exploring, harden2023there}}
    are also beneficial,
but they cannot manipulate 
    the data state of computational notebooks,
which is required for the workflow 
    described in Fig~\ref{fig:intro:motivation}.

In this work, we propose two-dimensional code+data space versioning for computational notebooks,
    allowing novel primitives,
        \textbf{execution rollback} and \textbf{code+data checkout},
    for \emph{consistently} navigating past states.
In Fig~\ref{fig:intro:2D_checkout}, for example, 
    users can \textbf{roll back} executions by moving the ``Head: Variable'' box
    from the latest commit (i.e., \texttt{rmse = test\_model}
\texttt{(test)} on top)
        to the first commit (i.e., \texttt{data=}
    \texttt{read(data.csv)}) at bottom).
Since only the data (i.e., variable) state has changed, users can still
    see all the code blocks they wrote before.
Alternatively,
    they can check out both code and data,
    moving both ``Head: Variable'' and ``Head: Code'',
        thereby effectively time-traveling back to the target commit.

These two primitives, rollback and checkout, ensure \emph{consistency};
    that is, the retained/restored data is equivalent to 
        what would have been produced 
            if (some of) the code had been executed from scratch.
We formalize this notion in \S\ref{sec:version}.
Our approach is unique
    compared to the one-dimensional versioning
        employed by code version control (e.g., git),
            database checkpointing, etc.
While this paper focuses on computational notebooks for clearer presentation,
    the same design principle applies to
        a broader range of interactive data science,
    including Matlab, Stata, SAS, and so on.

To verify the effectiveness of our code+data versioning,
we have built an end-to-end prototype, called \system.
Any code+data state manipulation on \system 
    alters in real-time the actual objects and code served on Jupyter.
Likewise, code executions in Jupyter
    appear immediately on \system as additional commits,
    advancing ``Head: Code'' and ``Head: Variable'' appropriately
        with optional branching if necessary.
This seamless integration is enabled by our novel component 
    that monitors Jupyter and 
        synchronizes its state according to user actions in real-time (\S\ref{sec:design}).

We evaluate the usefulness of nonlinear data science
    using Kishu-board through human subject studies (\S\ref{sec:evalmethod}).
Our quantitative study shows that \system helps users improve their productivity, especially in intensive workloads when re-execution takes a long time. Our qualitative data shows that users think \system's features are useful, the UI designs are mostly user-friendly, and they're willing to use or even pay for \system in the future (\S\ref{sec:eval}).


\begin{figure}[t]
\centering
\begin{subfigure}[b]{0.9\linewidth}
    \includegraphics[width=0.91\linewidth]{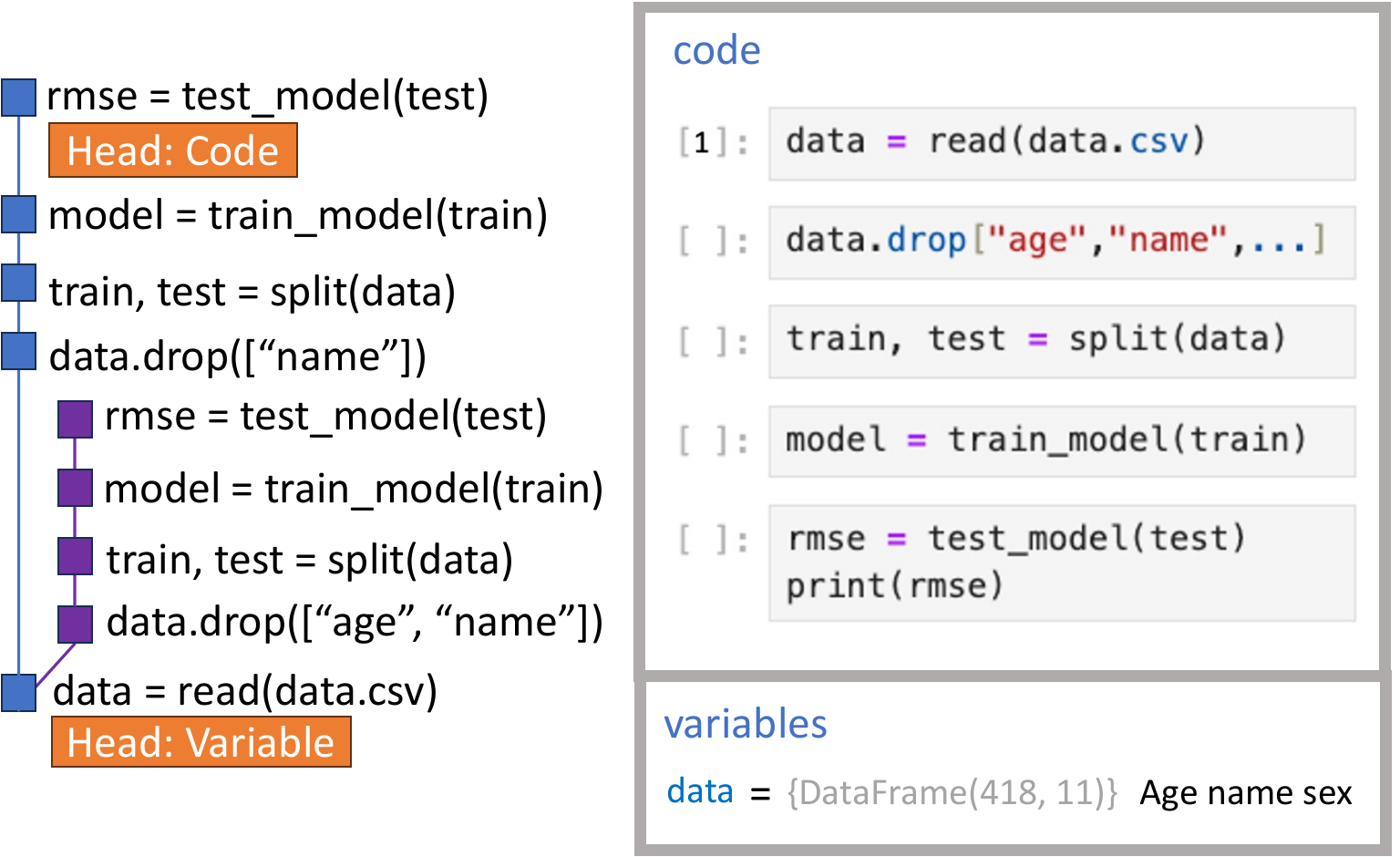}
    \vspace{-2mm}
    \caption{Code and data states for \textbf{execution rollback}}
    \vspace{3mm}
\end{subfigure}
\begin{subfigure}[b]{0.9\linewidth}
    \includegraphics[width=0.91\linewidth]{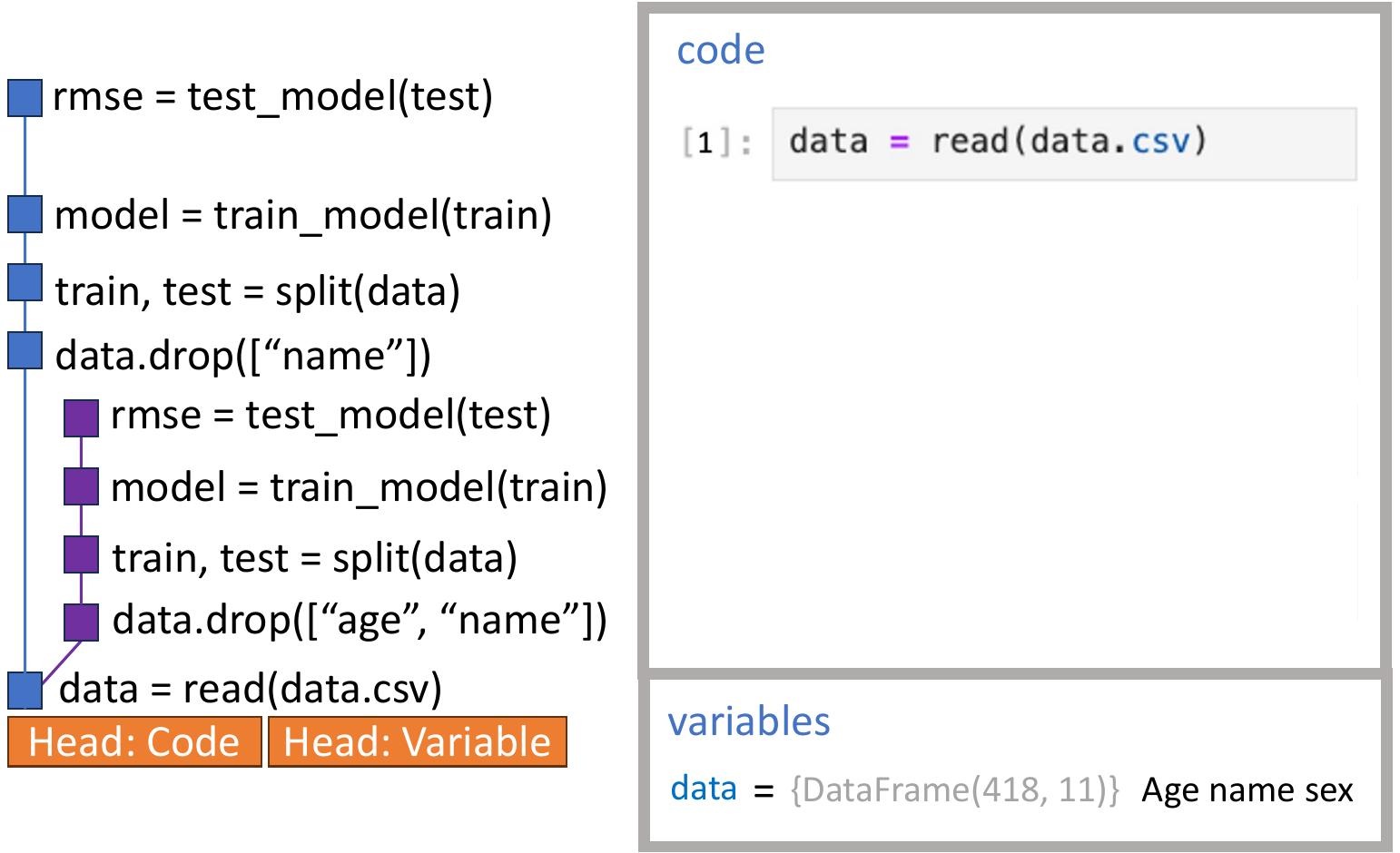}
    \vspace{-2mm}
    \caption{Code and data states for \textbf{checkout}}
\end{subfigure}

\vspace{-2mm}
\Description{The figure illustrates two scenarios in managing code and data states within a notebook environment: execution rollback and code+data checkout.
Panel (a): Code and Data States for Execution Rollback:
The left side shows a series of code cells, each representing a step in a data science workflow (reading data, dropping columns, splitting data, training a model, and evaluating the model).
Two branches are shown:
The main branch (blue boxes) follows the original execution path.
A secondary branch (purple boxes) represents a rollback to an earlier data state, maintaining the same code but with altered data (e.g., dropping different columns).
The "Head: Code" and "Head: Variable" labels indicate the points where the code and data states are currently focused.
Panel (b): Code and Data States for Checkout:
The right side shows the result of performing a checkout operation, where both code and data states are reverted to a specific past point.
The variables section at the bottom indicates that the data is now a different version of the DataFrame, reflecting changes from the original state}
\caption{\fix{By adjusting two knobs (i.e., orange boxes for code and data),
    we can achieve (a) execution rollback and (b) code+data checkout.
    \textbf{Execution rollback} is equivalent to altering only the data
        while keeping the code the same.
    \textbf{Checkout} is equivalent to altering both code/data
        to exactly a point that existed in the past.
    In addition to proposing this two-dimensional versioning,
        we formalize \emph{consistent} code/data states (\S\ref{sec:version}),
        develop a working prototype (\S\ref{sec:design} and \S\ref{sec:others}),
            and evaluate its usefulness in data science tasks 
            (\S\ref{sec:evalmethod} and \S\ref{sec:eval}).}}
\label{fig:intro:2D_checkout}
\end{figure}


This work makes the following contributions:
\begin{enumerate}
    \item \textbf{Conceptualizing Two-Dimensional Code+Data Versioning.} 
    To our knowledge, we are the first to propose and implement a new versioning system for interactive data science (\system) that supports non-linear exploration of both code and data. In contrast, prior work only supports the versioning for code~\cite{head2019managing,kery2019towards,kery2018interactions,gitbook}. We show through formative interviews with data scientists the critical need for versioning both code and data states. Using those insights,  we introduce the two new primitives: code+data checkouts, and code execution rollbacks. We build a real-time versioning system that supports these two primitives and integrates with a computational notebook. 
    The system ensures the code and data state consistency and supports the temporal exploration of a two-dimensional code+data evolution history (\S\ref{sec:design}, \S\ref{sec:version}, and Appendix~\ref{sec:others}). 
    Supporting non-linear exploration significantly 
        enhances data science workflows.
    \item \textbf{Interface for Two-Dimensional Code+Data Exploration.} 
    Prior interfaces have focused on navigating changes to code~\cite{head2019managing,kery2019towards,kery2018interactions,gitbook},  pain point of notebooks~\cite{ramasamy2023visualising,subramanian2019supporting,venkatesan2022automatic,wang2022stickyland,kery2017variolite}, visualizing the change of data frames~\cite{hohman2020understanding,wang2022diff}, or user behaviors when exploring data~\cite{kery2018story,guo2013workflow,raghunandan2023code,hohman2020understanding,liu2019understanding}. 
    We extend these existing interfaces to
        support code-data space navigation.
    Specifically, our key design elements 
    include mapping the two-dimensional code+data history to a one-dimensional graph,
        allowing
    users to select a previous code/data version,
        inspect its state, and load it.
    Additional tagging and searching 
    facilitate locating desirable commits. 
    Human subject studies demonstrate time savings; 
        this productivity benefit increases for more time-consuming workflows.  
\end{enumerate}



\section{Background and Why Now}
\label{sec:background}

This section briefly describes the advances
    in data science checkpointing.
We use Jupyter~\cite{jupyter} as an example
    considering its impact~\cite{colab,paperspace,modeanalytics}. 
This will help readers
    understand existing pain points (\S\ref{sec:interview})
    and our system design (\S\ref{sec:design} and
        \S\ref{sec:version}).


\subsection{Existing System for Interactive Data Science}

Jupyter is a web interface
    running on top of a Python interpreter.
Jupyter lets users submit a code block one at a time.
The result can be presented in diverse formats,
    ranging from text to JavaScript-based interactive plots.
Leveraging the Python ecosystem,
    users can import various libraries
        for data analytics, machine learning, visualization, and so on.
In this work,
    we focus on providing 
    nonlinear data exploration capabilities
    for this general framework
    without requiring any changes to
        Jupyter or any other libraries.

\subsection{No Checkpointing in Existing Data Science}

Database systems 
    such as PostgreSQL and Microsoft SQL Server create \emph{checkpoints},
    periodically saving changes~\cite{aries, sears2009segment}.
In contrast, data science systems lack mechanisms for
    identifying changes of kernel data,
        an important premise of checkpointing.
Existing databases achieve this through centralized buffer pages.
If data is updated,
    the corresponding pages are marked \emph{dirty} and saved later.
Instead, data science systems omit central data spaces and allow individual libraries 
    to manage data
        using
        shared memory~\cite{arrow}, 
        GPUs~\cite{pytorch,tensorflow}, 
        and remote machines~\cite{spark2012,ray}
    for high performance.
These libraries rarely trace changes in data.
As a result,
    nearly all data science systems,
        including Jupyter,
    lacks checkpointing.

\subsection{Advances in Data Science Checkpointing}
\label{sec:background-advances}

There are recent advances toward
    checkpointing data science systems (i.e., checkpointing kernel data)~\cite{chex,elasticnotebook,elasticnotebookdemo,kishu}.
The observation is that the standard serialization protocol 
    is insufficient for capturing the state.
Instead, object dependencies must also be considered
    to capture the state correctly~\cite{elasticnotebook}. Once dirty objects are identified, incremental checkpointing~\cite{kishu} saves only the relevant variables and tracks their changes over time, reusing unchanged variable states from earlier checkpoints. Incremental checkout~\cite{kishu}, on the other hand, efficiently transitions the current kernel state to the desired state. 
The idea has been validated for a large number (146) of data science classes~\cite{kishu} including
    in-memory analytics~\cite{harris2020array,mckinney2011pandas},
    GPU training~\cite{pytorch,tensorflow,hfpipeline},
    and distributed computing~\cite{spark2012,ray}.
The method has been integrated 
    with batch jobs~\cite{chockchowwat2023transactional}
    and Jupyter~\cite{elasticnotebookdemo}.

However, the ability to checkpoint data
    is insufficient for nonlinear data science.
First, we need a principled design
    for managing code and data states consistently
        and presenting them clearly.
Second, we should validate the usefulness
    of nonlinear data exploration for actual data science tasks.
This paper aims to close this gap.



\section{Formative Interview}
\label{sec:interview}

We met with six data scientists
    to hear their thoughts on nonlinear data science and how checkpoints for both code and data can be potentially used to support it.
These conversations suggest a new versioning mechanism
    in the two-dimensional code+data space. Compared to existing works that focus only on the pain point of notebooks~\cite{ramasamy2023visualising,subramanian2019supporting,venkatesan2022automatic,wang2022stickyland,kery2017variolite} or the user's behavior when exploring data~\cite{kery2018story,guo2013workflow,raghunandan2023code,hohman2020understanding,liu2019understanding}, we also learned lessons about how versioning for both code and data can potentially help them and what's the concerns when it comes to developing a complete version management system for nonlinear interactive data science based on those code and data checkpoints (\S\ref{sec:interview_checkpointing}).
    


\begin{table}[t]
\caption{Pain points in Jupyter}
\label{tab:painpoints}

\vspace{-2mm}
\centering

\small
\begin{tabular}{lll}
\toprule
\textbf{Pain Point} & \textbf{Interviewee} & \textbf{Do We Solve}
    \\ \midrule
P1: Session suspension deletes data  & F5, F6 & Yes \\
P2: Cannot restore overwritten data & F2, F4 & Yes \\
P3: Hard to debug & F3 & Yes \\
P4: No data versioning & F3 & Yes \\
P5: History becomes messy quickly  & F1 & Partly \\
P6: Hard to scale resources & F4 & Partly \\
\bottomrule
\end{tabular}
\end{table}



\begin{table*}[t]

\caption{Suggestions for nonlinear data science}
\label{tab:suggestions}
\vspace{-2mm}

\centering
\small

\begin{tabular}{ l l l }
\toprule
\textbf{Suggestion} & \textbf{Interviewees}  & \textbf{Our Approach}  
\\ \midrule
S1: Keep code and revert only variable state& F1, F4&    
\multirow{2}{*}{$\left.\begin{array}{l}
                \\
                \\
        \end{array}\right\rbrace$ Checkout code and kernel states separately}

    \\
S2: Distinguish the changes in data sources and code & F2 &\\
S3: Show code/variable change history separately 
    & F2, F4, F5 
    & 1-D timeline for 2-D state transition \\
S4: Concern: Too many commits for one notebook & F1, F4 & 
    Context-aware auto-folding (of commit history)
    \\
S5: Show the lineage of variables
    & F1, F2, F3 
    & Search: highlight commits that modified the variable \\
S6: Compare two commits & F2 
    & Independent \texttt{diff} of code and variable \\
\midrule
S7: Different commit granularity for different use cases& F2 & 
    \multirow{3}{22em}{Can be supported, but not implemented in this work }\\
S8: Cherry picking variable from a different branch & F1, F6 & \\
S9: Sharing, merging, collaboration& F1, F2, F4, F5 &\\
\bottomrule
\end{tabular}
\end{table*}


\subsection{Interview Methodology}

\paragraph{Group}

Five data scientists (F1–F5) were from Meta, Microsoft, and start-ups.
They all had PhD in computer science and 
    more than two years of industry experience. 
We also interviewed with a fourth-year PhD student (F6)
    studying machine learning.

\paragraph{Procedure}

We had a 60-min semi-structured interview with every data scientist. Each interview had three parts. 
First, we discussed how they used Jupyter. 
Second, we asked about current pain points. 
Finally, we shared the idea of using code+data checkpoints to support nonlinear exploration and 
    asked how it could be enhanced. 
\fix{During the process, we each noted the interviewees' answers. Later, we combined our documents and summarized their answers into several categories.}


\subsection{How They Use Jupyter}

They used Jupyter for three main purposes.
The first was exploratory modeling. 
Using a sample,
they would experiment with different parameters (F6), datasets (F4, F6), or models like gradient boosting (F2, F5). 
Due to resource constraints, final models were usually trained outside Jupyter (F1-F6).
The second use case was data loading and transformation.
F4 would start a new session, load a dataset, experiment with various data transformation methods, 
and then save the transformed results to CSV for future use.
Finally, Jupyter was used for quick debugging and sanity checks.
F3 would copy suspicious code into Jupyter
    to verify data formats.
In all these, Jupyter was a convenient tool for examining
    multiple alternatives of models, hyperparameters,
        data transformation, and so on.

\leavevmode\vspace{-5pt}

\subsection{Current Pain Points}\label{sec:interview-painPoint}
The interviewees (F1--F6) shared various inconveniences they had been experiencing.
See Table~\ref{tab:painpoints} for a summary.

(P1) 
Jupyter's web interface may lose its connection
    with the underlying kernel (i.e., Python interpreter)
    due to network issues or system suspensions.
Then, all the (intermediate) data that have been generated
    are erased and cannot be restored.
F5 and F6 typically re-executed cells to recover their work,
    which was tedious.

(P2)
Using Jupyter to compare different alternatives
    often overwrites data that cannot be restored easily.
For example, F4 noted ``If I copy a script from somewhere else into Jupyter and execute it,
it might override some variables. 
It's hard to track. You also need to rerun some cells to refresh the variables before proceeding.''


(P3, P4. P5)
In Jupyter, it is often hard to understand
    which of the past executions caused the current (abnormal) result.
One mitigation mechanism is to print as many variables as possible
    every time a code block is executed, which is cumbersome.
Of course, we cannot print variables for past states. As F1 noted ``Many times the code only works for the current moment because notebook is messy and you don't know how to execute the cells in the right order to reproduce the current variables. You have to mentally maintain the order when developing in notebook.''

(P6) A session cannot be moved between machines,
    making dynamic scaling challenging.
Running Jupyter for large-scale distributed machine learning
    is uncommon in the industry.

\subsection{How Versioning for Both Code
and Data Can Help Them}\label{sec:interview_checkpointing}

\paragraph{Benefits of Versioning}

When we shared the idea of supporting nonlinear data science
    through versioning both code and data,
all interviewees agreed on its potential usefulness.
Specifically,
    many of the current pain points (P1--P4)
        can almost be directly solved by the new capability.
P5 can be partly addressed
    as users can browse their exploration history
        and the variables from the past states.
P6 can be mitigated as checkpointing
    enables moving Jupyter sessions between machines,
        allowing dynamic resource scaling.

\paragraph{Further Suggestions}

In addition to excitement,
    the interviewees shared suggestions for further improvements.
We summarize this feedback in Table~\ref{tab:suggestions},
    and highlight a few novel ones as follows.
S1--S3 suggested independent versioning of code and data.
\fix{In particular, S2 would like to compare changes in data sources (e.g., file names, dataset versions, data fetching parameters) and/or changes in cell code.}
F1 mentioned an interesting use case in which one needs to keep the code 
but revert variables: 
``There's a mistake in my code, and I need to revert my variable state from the mistake, and then correct and rerun the code after it.'' 
F2 brought up another case:
    only the input data changes while keeping the code the same.

Some expressed concerns that a commit history may become too complex
    as exploration continues.
This may require additional tools for easier exploration of histories.
Also, they mentioned a need for tracking changes in a specific variable.
Moreover, comparing (i.e., diff-ing) two commits can be useful
    for easier understanding.



\section{System Overview: Use Cases, User Interface, and Architecture}
\label{sec:design}

In this section, we describe how we design a new user interface (UI), called \system,
    to address the suggestions from Table~\ref{tab:suggestions} (i.e., S1-S6).
Specifically, we explore specific use cases (\S\ref{sec:design-usecases}),
    and derive a UI for supporting those use cases (\S\ref{sec:design-ui}).
Finally, we describe how \system interacts with the underlying Jupyter system (\S\ref{sec:design-arch}).


\begin{figure*}[t]
    
\includegraphics[width=0.95\textwidth]{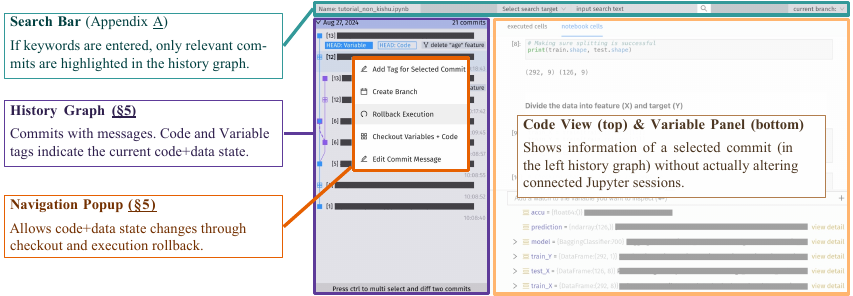}
    
\vspace{-2mm}
\caption{
\system user interface.
The history graph (purple box) shows past commits
    with code+data tags,
    allowing users to quickly grasp
        the current state.
The code and variable panes (yellow box) display
    the information of a selected commit (in the history graph).
From any past commit,
    users can load data only (i.e., execution rollback)
        or load both code and data (i.e., checkout)
    using the navigation popup (red box).
Then, the Code and Variable tags move appropriately.
}
\Description{This figure shows Kishuboard User Interface
The interface includes:
Search Bar: Filters commits based on keywords.
History Graph: Displays past commits with code and data states.
Navigation Popup: Allows switching between code+data states (rollback or checkout).
Code View & Variable Panel: Shows details of the selected commit without changing the active Jupyter session.
This interface helps users manage and explore different versions of computational notebooks effectively.
}
\label{fig:ui:kishuboard}
\end{figure*}


\subsection{Potential Use Cases}
\label{sec:design-usecases}

\paragraph{Manage Alternatives} 
Alice completes data transformation, feature engineering, 
    and model selection.
Later,
    a new feature engineering method occurs to her.
Normally, trying it would require overwriting her data, forcing her to reload the entire dataset and re-execute cells.
With \system, Alice can create a new branch to test the alternative method without losing her current progress. 

\paragraph{Understand What Happens and Reverse Debug} 
Bob shares his work with colleagues. 
Without a tool like \system, understanding the order in which cells were executed can be challenging, especially since cell execution can happen out of order~\cite{wang2020assessing,head2019managing,kery2019towards,kery2018interactions,gitbook}. 
With \system, Bob and his colleague can view a detailed execution log
    and understand the history.
Unlike existing extensions~\cite{head2019managing,kery2019towards},
    \system allows it without
        repeatedly re-executing cells and printing outputs~\cite{chattopadhyay2020s}.

\paragraph{Quick Work Recovery When Restart}

Computational notebooks run on top of the underlying kernels (e.g., Python, R).
If the kernel crashes~\cite{de2024bug,chattopadhyay2020s,li2023elasticnotebook},
    all the cells need to be re-executed from the beginning
        to restore states, which is time-consuming and error-prone~\cite{wang2020assessing,reimann2023alternative,rule2019ten,singer2020notes}.
\system allows for quickly restoring the work by checking out the latest commit.

\paragraph{Isolated Exploration Branch} \system can also improve code quality. 
Exploring a new path often involves creating/renaming variables,
    accidentally overwriting important variables.
\system effectively isolates the code for different branches.
Users only need to check out a commit to restore both code and data states. 

\subsection{User Interface}
\label{sec:design-ui}

The design goal of \system is to let
    users easily manipulate the current state of computational notebooks
    through two independent knobs (i.e., code and data)
while allowing them to easily understand 
    how the notebook states have been changed over time.
We aim to achieve this high-level goal
    by introducing minimal changes to commonly used UIs
        for intuitive understanding and easy adoption.

\paragraph{Components}

\system UI comprises four main components as shown in Fig~\ref{fig:ui:kishuboard}: 
the code view (right), the history graph (left), the search bar (top), and the variable panel (bottom). 
The \textbf{code view} presents the notebook's state at a specific point in time, displaying the content of various cell types (e.g., code cells, markdown cells) along with their outputs. 
Users can toggle the code view to an alternative mode 
    that shows the sequence of executed cells.

The information in the code view corresponds to a commit chosen by the user, 
while the \textbf{history graph} shows the relationships among commits. 
Users can navigate through the history graph to browse commits
    and select an individual commit for details. Based on our technical descriptions presented in \S\ref{sec:version}, our UI visualize the two-dimensional data+code version history in one-dimensional space. This has the following benefits:
\begin{enumerate}
    \item \textsf{Managing Large Number of Commits.} \quad
By linearizing code+ data versions into a one-dimensional axis, \system effectively visualizes numerous commits within the commit history graph.
To further enhance scalability, \system employs auto-folding, which groups and folds related commits to reduce vertical space usage, particularly for long branches.~(Appendix~\ref{sec:others_fold})
Given that each commit occupies a row in the commit history graph, \system implements infinite scrolling, where commits are lazily loaded and prefetched as users navigate through the history.
This approach ensures efficient browsing without overwhelming the UI with excessive data.
\item \textsf{Managing Large Number of Branches.} \quad To handle many branches, \system can leverage practices from existing GUIs for version control systems~\cite{github,fork} to display recently active branches and collapse other branches. 
\system adopts an existing method in gitamine~\cite{commitgraph} to draw the commit history graph, which prevents crossings across different branches.
\system can also automatically select top-K most relevant branches based on their attributes and proximity to the current state. 
Additionally, \system includes a branch management panel that lists all branches, helping users quickly locate the commit where the branch head resides.
\end{enumerate}

The \textbf{search bar}
    offers searches based on commit content, commit message, and names of the variables modified by the commit.
Commits that match with the search are highlighted in the history graph. One specific use case for search is to find all commits that changed a specific variable: Users can track how the variable changes by inspecting those commits highlighted as search results.

Finally, the \textbf{variable panel} shows the state of variables at a selected commit. 
This panel provides details such as the variable name, type, and a string representation of its value, 
along with extended representations for special types, 
such as table formatting for data frames. 
Users can also expand variables 
    to inspect nested attributes. To ensure scalability, \system presents variable names and associated statistics in a concise table format.
For workflows involving many variables, the table is paginated, showing only a subset at a time.
To aid navigation, a search bar allows users to find and pin specific variables, which are displayed at the top of the table.
For large variables, \system truncates their displayed content, maintaining a balance between usability and scalability.

\paragraph{Two Knobs: Code and Data}

Unlike other systems (e.g., git),
    Kishu-board lets users navigate in the two-dimensional code+data space.
Specifically, the users can load data (and variables) from a past commit
    while retaining the current code,
        effectively achieving execution rollbacks.
Using \system's UI (Fig~\ref{fig:ui:kishuboard}),
    users can perform this execution rollback
        by right-clicking a target commit and selecting ``Rollback execution''.
The actual Jupyter state changes immediately
    as \system lets it load variables from the target commit.
Jupyter execution counters---displayed by each cell to
    indicate orders---update appropriately.
Likewise, the users can load both code and data states from
    a past commit by
        selecting ``Checkout Variables + Code'',
    which effectively time-travels the entire system state
        (i.e., both code and data).


\begin{figure}[t]
    \centering
    \includegraphics[width=0.4\textwidth]{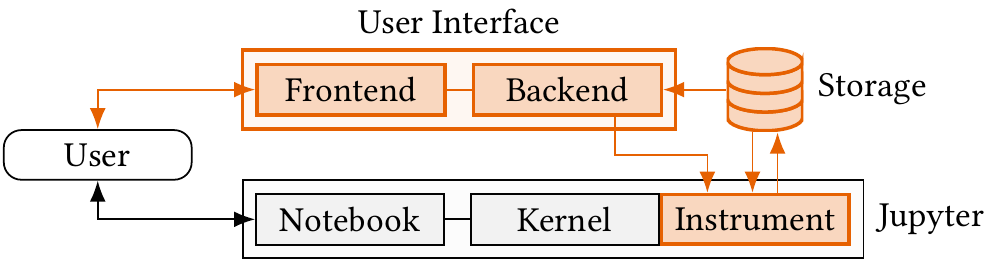}
    \Description{
The diagram illustrates Kishuboard's architecture, highlighting three main components:
1. User. Interact with the User Interface component to manipulate the Kishuboard, and interact with the notebook component to edit and execute notebook cells.
2. User Interface. Composed of a frontend and a backend, with the user component interacting with the frontend and the storage component interacting with the backend.
Storage: Interact with the backend.
Instrument: Works within the Jupyter environment, connecting the kernel and backend to facilitate interaction and data management.
This architecture enables seamless integration between the user interface, storage, and Jupyter's computational environment.
    }
    \caption{\system's system architecture consisting of user interface, storage, and instrument \textcolor{cAmain}{highlighted}.}
    \label{fig:architecture}
\end{figure}


\subsection{System Architecture}
\label{sec:design-arch}

To support the UI above, our system is designed to have three main functions:
(1) \textbf{Collecting Information}: When \system is enabled on a notebook, it attaches an \emph{instrument} to the Jupyter kernel. This instrument detects cell executions and captures all relevant notebook state information, including cell content, cell outputs, and variable values. It then persists this data in \system's \emph{storage}, which also maintains additional metadata, such as commit histories and branch positions.
(2) \textbf{Browsing Code and Data}: The \emph{UI} of \system reads from the storage and preprocesses this information in its \emph{backend} server, before displaying it on the \emph{frontend} web interface. When users interact with various UI components---such as selecting a commit---the UI requests the additional data and updates the displayed state accordingly.
(3) \textbf{Altering Notebook State}: Certain interactions, such as checking out a commit or rolling back executions, require changes to the Jupyter kernel and notebook. In these cases, \system's UI sends commands to its instrument, which reads the necessary information from storage and modifies the kernel and notebook states as needed.

\paragraph{Implementation} \system's UI is a web application consisting of a frontend written in TypeScript with React and a backend written in Python with Flask. \system's storage is an embedded database library written in Python. We integrate \system's instrument onto Jupyter based on its extension framework~\cite{ipython_callbacks}.



\section{Formalizing Code+Data Versioning: Model, Consistency, and Visualization}
\label{sec:version}

Users need the ability to recover code and data separately (Tabel~\ref{tab:suggestions}), but certain recovery actions can confuse the users and lower productivity. To address this, we first formally define the data model (\S\ref{sec:version-set}) and its state transitions (\S\ref{sec:version-navigate}) to identify cases where checkouts can be problematic (\S\ref{sec:version-problem}). To overcome these issues, we introduce a consistency property to prevent problematic actions (\S\ref{sec:version-property}).

\subsection{Data Model}
\label{sec:version-set}

\system helps users manage ``code+data'' versions, which broadly encompass notebook and variable states. This section formalizes what \system regards as code, data, versions, and commits.

\paragraph{Code and Data} On the high level, the \textbf{code state} consists of notebook cells (e.g., code and Markdown cells) and outputs; 
    the \textbf{data state} consists of a variable map (i.e., a mapping from variable names to their values) as well as execution history.
Specifically,
let $C_i$ be the $i$-th code state and $D_j = (X_j, H_j)$ be the $j$-th data state consisting of variable map $X_j$ and execution history $H_j$. An execution history is a list of strings $H_j = [h_j[1], h_j[2], \dots]$ where $h_j[n]$ denotes the $n$-th executed cell.

\paragraph{Versions} A version is a pair of code and data state; that is, $V_k = (C_i, D_j)$. 
Within \system,
code and data states are managed independently. 
When users \emph{edit} any of the notebook code or Markdown cell, the code state progresses while the data state remains the same, from $V_k = (C_i, D_j)$ to $V_{k'} = (C_{i'}, D_j)$. 
When users \emph{execute} a code cell $c$, they may update both code and data states by generating a new cell output and mutating the variable map from $V_k = (C_i, D_j)$ to $V_{k'} = (C_{i'}, D_{j'})$. In this latter case, the execution history will include the newly executed code cells, $H_{j'} = H_j \oplus [c]$, where $\oplus$ is the concatenation between two lists.

\paragraph{Commits} A commit is a version that \system persists.
By default, \system creates and persists a version into a commit after every cell execution. In addition, \system also allows users to create a commit manually.
In general, commits are a subset of all possible versions. 
We denote $\mathcal{V}$ as the set of all versions and $\mathcal{U}$ as the set of commits; $V_i \in \mathcal{U}$ implies that \system persisted $V_i$.

\subsection{Navigating in Code+Data Versions}
\label{sec:version-navigate}

\begin{figure*}[t]
    \centering
\begin{subfigure}[t]{\textwidth}
\centering
\includegraphics[width=1.0\textwidth]{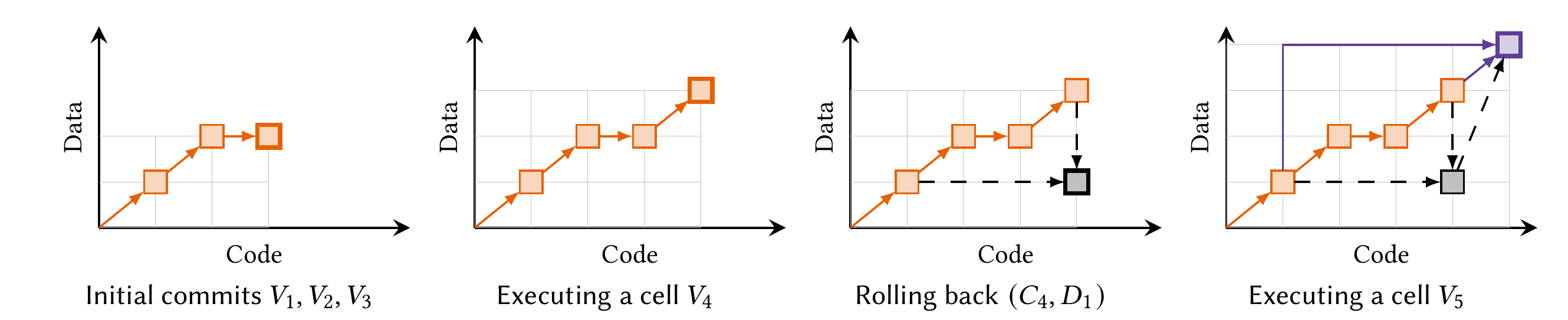}

    \Description{
    The figure shows the process of navigating through different versions in a 2-dimensional axis system.
    }
    \caption{\fix{Navigation in two-dimensional axis system: An example of 4 events from left to right. Thick rectangles denote head versions.}}
    \label{fig:nav2d}
\end{subfigure}
\begin{subfigure}{\textwidth}
    \centering
    \includegraphics[width=1.0\textwidth]{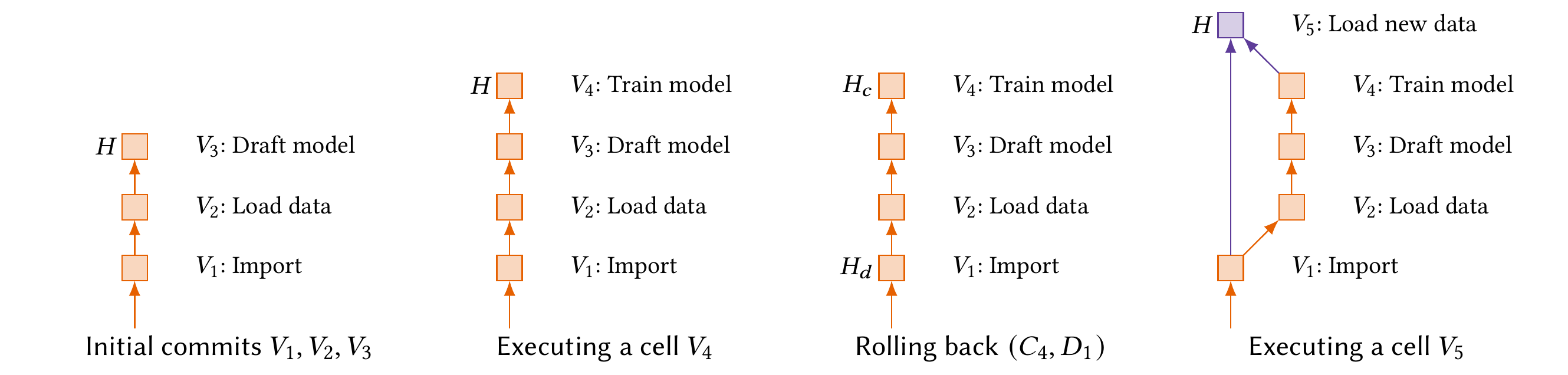}

    \Description{
    The figure shows the process of navigating through different versions in a 1D axis system.
    }
    \caption{\fix{Navigation in one-dimensional axis system: $H$ denote the current head's code+data version, while $H_c$ and $H_d$ denote code and data versions.}}
    \label{fig:nav1d}
\end{subfigure}
    \vspace{-2mm}
    \caption{\fix{Examples of navigation in two-dimensional and one-dimensional axis systems}}
    \label{fig:combined-navigation}
\end{figure*}

Besides editing the notebook and executing a code cell, notebook users may navigate to different versions through \system by checking out a commit and rolling back executions.
\system visualizes these changes using a one-dimensional axis system.

\paragraph{\fix{Two-dimensional Versions} and History} By definition, versions are two-dimensional; a version $V_k$ is described by a coordinate of $C_i$ and $D_j$. For example, in Fig~\ref{fig:combined-navigation}a, Bob, a data scientist, has created three committed versions $V_1 = (C_1, D_1)$, $V_2 = (C_2, D_2)$, and $V_3 = (C_3, D_2)$ from his notebook. When he executes a cell, he creates another committed version $V_4 = (C_4, D_3)$.
In theory, the notebook could be in any combination of code and data states such as $(C_1, D_2)$ or $(C_3, D_3)$, denoted as a grid in Fig~\ref{fig:combined-navigation}a:
Bob can roll back execution from $V_4 = (C_4, D_3)$ to $(C_4, D_1)$.

Each committed version $V_k$ has a code parent $P_c(V_k)$ and a data parent $P_d(V_k)$ which indicates the previous respective state. Both parents may belong to the same or different committed versions. For example, $V_3$ is both code and data parent of $V_4$ because $V_3 = (P_c(V_4), P_d(V_4))$. On the other hand, when Bob reloads his data after rolling back execution, he would create a commit $V_5$ with a code parent $P_c(V_5) = C_4$ and a data parent $P_d(V_5) = D_1$. The collection of versions and their parents form a history $(\mathcal{U}, P_c, P_d)$, embedded in the code+data space.

\paragraph{Visualization Commit History Graph \fix{in One-dimensional Axis System}} Instead of displaying commits using the \fix{two-dimensional} code+data space, \system opts to visualize the history using a \fix{one-dimensional} \textbf{commit history graph} that represents commits using nodes and their parent relationship using edges.
Fig~\ref{fig:combined-navigation}b shows the commit history graph equivalent to the \fix{two-dimensional} history in Fig~\ref{fig:combined-navigation}a.
By doing so, we simplify the user's learning curve due to its resemblance to familiar Git interfaces~\cite{sublime_merge,git_gui_clients,gitkraken_best_git_gui}. 
\fix{Furthermore, this approach utilizes space more efficiently: one-dimensional axis system takes $O(NB)$ to display $N$ commits and $B$ branches (typically, $B \ll N$) while the two-dimensional system takes $O(N^2)$.}

In addition, this commit history graph denotes the head commit $H = (H_c, H_d)$ (i.e., the current version) by labeling the corresponding commit(s). If the current head's code and data states belong to different commits, the commit history graph labels the two commits accordingly (e.g., the third step in Fig~\ref{fig:combined-navigation}b). As a result, the commit history graph displays a commit with two different parents (e.g., $V_5$ in Fig~\ref{fig:combined-navigation}b) by connecting it directly to its parents.

\subsection{Problematic Checkouts}
\label{sec:version-problem}

Theoretically, \system could allow checking out any pair of code state $C_i$ and data state $D_i$; however, we observe problematic checkouts that create inconsistent states,
    as illustrated in Fig~\ref{fig:problematic_checkouts}. 

\begin{figure}[t]
    \begin{subfigure}[b]{0.45\textwidth}
        \centering 
        \includegraphics[width=\linewidth]{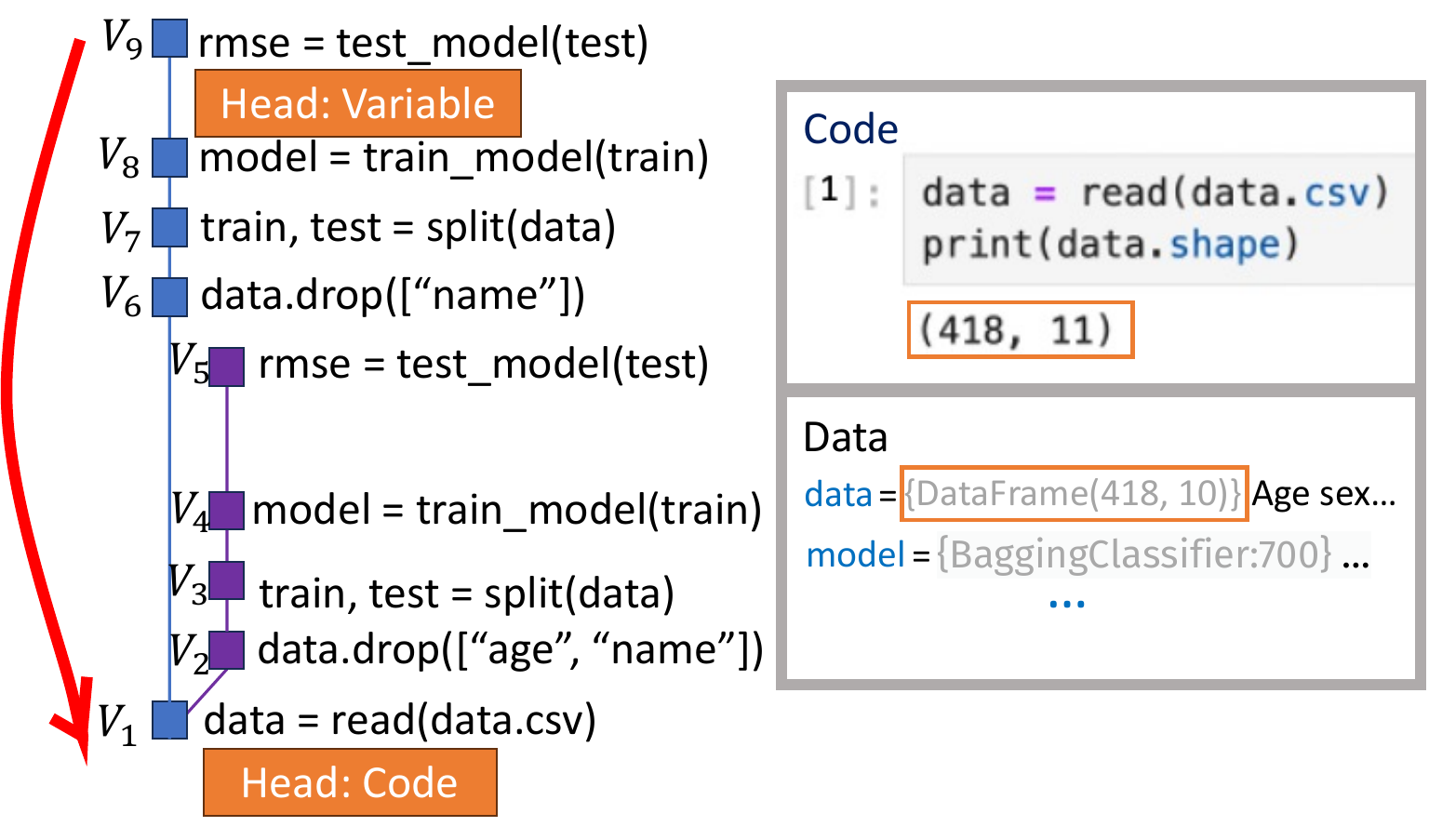}
        \vspace{-4mm}
        \caption{Checkout Only Code.}
        \label{fig:checkout_only_code}
    \end{subfigure}
     \hfill
    \begin{subfigure}[b]{0.45\textwidth}
        \centering 
        \includegraphics[width=\linewidth]{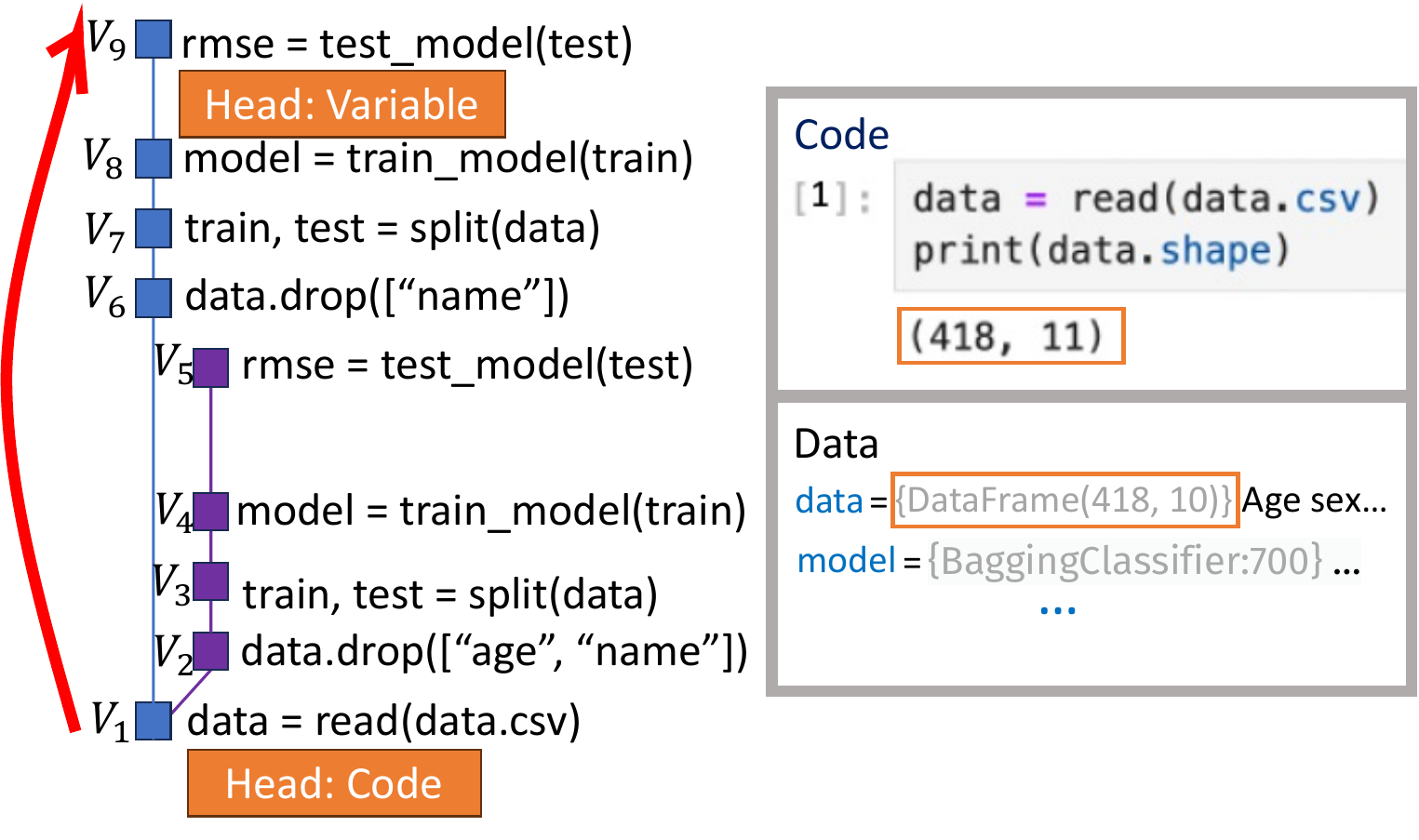}
        \vspace{-4mm}
        \caption{Checkout Future Data.}
        \label{fig:checkout_future_data}
    \end{subfigure}

    \vspace{2mm}
    \begin{subfigure}[b]{0.45\textwidth}
        \centering 
        \includegraphics[width=\linewidth]{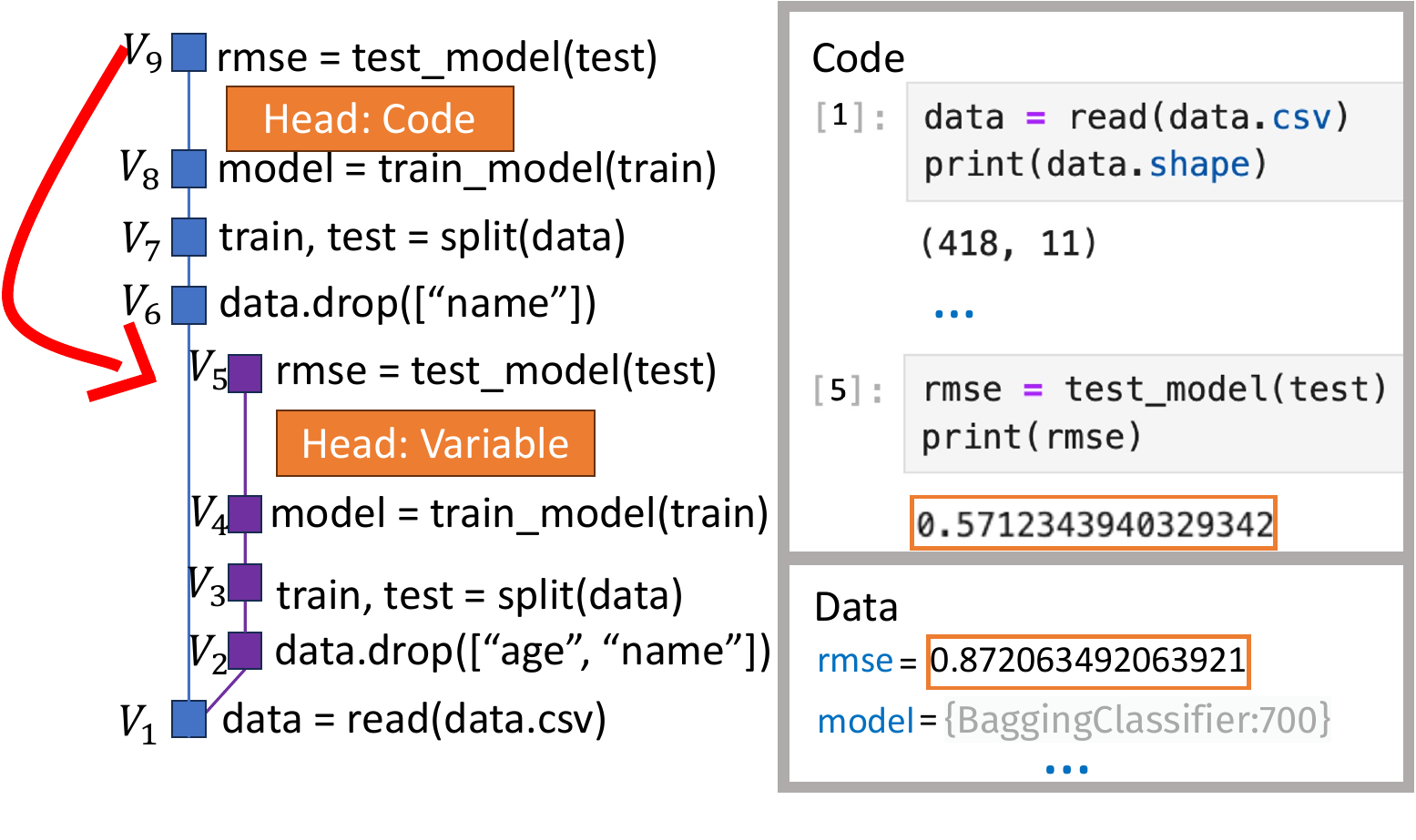}
        \vspace{-4mm}
        \caption{Checkout Unrelated Data.}
        \label{fig:rollback_unrelated_data}
    \end{subfigure}

    \vspace{-2mm}
    \Description{The figure shows three scenarios where checking out different versions disrupts code and data consistency, with orange rectangles highlighting inconsistencies between the code and data states.}
    \caption{\fix{Three problematic checkouts that break consistency. The inconsistency of code and data is highlighted in orange rectangles. \textbf{(a) Checkout Only Code}: the code(including cell output) is checked back to reflect data in $V_1$, but the data is still in $V_9$. \textbf{(b) Checkout future data}: the data is checked out to $V_9$, but the code still reflects data in $V_1$. \textbf{(c) Checkout Unrelated Data}: the data is checked out to $V_5$ on another branch, but the code still reflects the current branch's data.}}
    \captionsetup{skip=-10pt}
\label{fig:problematic_checkouts}
\end{figure}

\paragraph{Problems When Checking Out Only Code} In Fig~\ref{fig:problematic_checkouts}a, consider a scenario where only the code is checked out to $V_1$ while the data is still in $V_9$. Based on the code, users misunderstand that \texttt{df} holds the data from the original \texttt{data.csv}. However, since the kernel data is still $V_9$, the \texttt{name} feature is dropped from the original \texttt{df}.

\paragraph{Problems When Checking Out Future Data} Similarly, checking out data from $V_1$ to $V_9$ but keeping the code in $V_1$ will lead to the same inconsistency as checking out only code.

\paragraph{Problems When Checking Out Unrelated Data} Alice is at $V_9$, and she checks out only the data to $V_5$ from a different branch in history. This is also confusing because the code is still at $V_9$, and from the code, users assume the \texttt{age} feature is never dropped, and the model is trained with the \texttt{age} feature, with the \texttt{rmse} of \texttt{0.5123}; however, this is not the case for our variables in kernel, which is already in $V_5$. The nature of the problem is that $V_5$ and $V_9$ are on different branches, so their histories are unrelated.

\subsection{Consistency and Safe Checkouts}
\label{sec:version-property}

$V_k = (C_i, D_j)$ is \textbf{consistent} if and only if the cell outputs in $C_i$ agree with the results of execution according to the execution history $H_j$. 
Formally, for all executed cell $c \in C_i \cap H_j$, let $n$ be the order the cell $c$ is executed (i.e., $h_j[n] = c$), $V_k = (C_i, D_j)$ is consistent if and only if $\text{Output}(c) = \text{Exec}([h_j[1], h_j[2], \dots, h_j[n]])$. Here $\text{Exec}(H)$ is the result of execution over a list of code cells $H$ and $\text{Output}(c)$ is the output of the cell $c$.

\begin{theorem} \label{theorem:commit-consistent}
    All commits $V_k \in \mathcal{U}$ are consistent.
\end{theorem}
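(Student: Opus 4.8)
The plan is to prove the statement by induction on the sequence of user events that \system records (edits, cell executions, checkouts, and rollbacks), but to strengthen the inductive hypothesis from ``every commit is consistent'' to ``the current head version is always consistent.'' This strengthening is what makes the argument go through: because every commit in $\mathcal{U}$---whether created automatically after a cell execution or manually---is simply a snapshot of the head version at some instant, establishing that the head is consistent at \emph{all} times immediately yields that all $V_k \in \mathcal{U}$ are consistent. Without this, one cannot reason about commits created by executing a cell from a head that was reached by navigation.

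For the invariant I would actually carry two properties at once. Besides the output-consistency of the definition---call it (I2): for every $c \in C_i \cap H_j$, the displayed $\text{Output}(c)$ equals $\text{Exec}$ of the history prefix ending at $c$---I would also track a data self-consistency property (I1): the variable map $X_j$ equals $\text{Exec}(H_j)$, i.e., the live kernel faithfully reproduces its own execution history. Property (I1) is exactly what is needed to discharge the execution case of (I2). The base case is the initial empty notebook, where $C \cap H = \emptyset$ so both (I1) and (I2) hold vacuously. For an \emph{edit}, the data is unchanged and the edited cell's new content no longer matches its history entry, so it leaves $C \cap H$; constraints only shrink and consistency is preserved. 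For an \emph{execution} of a cell $c$ from a consistent head $(C_i, D_j)$, the new history is $H_{j'} = H_j \oplus [c]$ and the new output is literally the result of running $c$ on the live kernel; by (I1) that kernel equals $\text{Exec}(H_j)$, so the output equals $\text{Exec}(H_{j'})$, establishing the new (I1) and (I2) for $c$, while every previously executed cell keeps both its output and its history prefix unchanged (Jupyter updates only the executed cell), so (I2) persists for it. For a \emph{checkout} to an existing commit $V = (C_i, D_j)$, the head simply becomes $V$, which is consistent by the inductive hypothesis.

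The hard part will be the \emph{rollback} case, and more generally arguing that the navigation primitives never land the head in one of the inconsistent configurations catalogued in Fig~\ref{fig:problematic_checkouts}. Here I would invoke the design restriction formalized in \S\ref{sec:version-property}: \system permits only ``safe'' navigation, loading data from a commit that lies on the current code's lineage, which rules out ``checkout only code,'' ``checkout future data,'' and ``checkout unrelated data.'' Concretely, for a rollback to data $D_{j'}$ drawn from an ancestor commit, (I1) holds because $D_{j'}$ came from a commit already satisfying (I1), and (I2) holds because each retained cell whose code string still appears in $H_{j'}$ shows exactly the output it had when that ancestor commit was created, whereas cells edited or executed only later fall outside $C \cap H_{j'}$ under the code-string match.

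The delicate modeling points I would need to pin down before the case analysis is fully rigorous are (i) the precise meaning of cell identity versus cell content in the intersection $C_i \cap H_j$ (so that ``the cell leaves the intersection upon editing'' is well defined), and (ii) the determinism of $\text{Exec}$, so that the equation $\text{Output}(c) = \text{Exec}(\dots)$ is even meaningful. Once those choices are fixed and the safe-navigation restriction of \S\ref{sec:version-property} is taken as the operational definition of what checkouts and rollbacks \system allows, each case reduces to a short verification, and the induction closes.
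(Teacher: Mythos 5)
Your proposal is correct and rests on the same basic induction over user operations that the paper uses, with the same two core cases (an edit removes the cell from $C_i \cap H_j$ so consistency only loses constraints; an execution appends $c$ to the history and sets $\text{Output}(c)$ to the result of running it, which is the required $\text{Exec}$ value). However, you go beyond the paper's proof in two ways that are worth noting. First, the paper's induction is phrased as ``all commits are subsequent versions of the empty notebook by editing or executing cells,'' which quietly ignores that a commit can also be created by executing a cell \emph{after} a rollback or checkout; the consistency of that intermediate navigated version is only established later (in the second theorem and the safe-checkout discussion), so the paper's Theorem~1 proof has a mild forward dependency. Your strengthened invariant---the \emph{head} is consistent at all times, with explicit checkout and rollback cases---closes that loop in one induction and makes Theorem~1 genuinely self-contained given the safe-navigation restriction. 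Second, your auxiliary invariant (I1), that the live variable map equals $\text{Exec}(H_j)$, makes explicit an assumption the paper uses silently: when it asserts that executing $c$ yields output $\text{Exec}([h_j[1],\dots,h_j[n]])$, it is implicitly relying on the kernel state being reproducible from the recorded history. Your version is therefore a more careful rendering of the same argument rather than a different argument; what it buys is rigor at the navigation boundary, at the cost of having to pin down cell identity and determinism of $\text{Exec}$, which you correctly flag as the remaining modeling choices.
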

\begin{proof}
    We prove the statement by induction. Recall that the empty notebook version $V_0 = (C_0, D_0)$ where $C_0$ contains no cells and $D_0$ contains no variables is consistent because the set of executed cells is empty $C_0 \cup H_0 = \varnothing$.
    
    All commits are subsequent versions of the empty notebook version by editing or executing notebook cells; therefore, it suffices to prove that editing or executing notebook cells from a consistent version $V_k = (C_i, D_j)$ makes a consistent version $V_{k'} = (C_{i'}, D_{j'})$. If a cell $c \in C_i$ is edited, $c$ becomes a non-executed cell in $V_{k'}$; $V_{k'}$ remains consistent because of $V_k$ consistency on other executed cells. If a cell $c \in C_i$ is executed, it becomes an executed cell $c = h_{j'}[n] \in H_{j'}$ and its output is updated with the execution result $\text{Output}(c) = \text{Exec}([h_j[1], h_j[2], \dots, h_j[n]])$, which satisfies the consistency condition. Hence, editing or executing notebook cells preserves consistency.
\end{proof}

Problematic checkouts (\S\ref{sec:version-problem})
are confusing because they can induce inconsistent code+data states.
Checking out only code may introduce inconsistent cell outputs from a different sequence of cell executions. Checking out future data may contain missing cell outputs. Lastly, checking out unrelated data on a different branch may admit execution results that conflict with cell outputs.
To prevent potential confusion, \system implements a safeguard that disallows these unsafe checkouts.
As a result, \system allows two types of safe checkouts:


\paragraph{Checking Out Both Code and Data} Users can safely check out both code and data states from any commit. Thanks to Theorem~\ref{theorem:commit-consistent}, checking out both code and data from a commit always produces a consistent version. For example, in Fig~\ref{fig:consistent_checkouts}a, when we are on $V_9$ and want to continue exploring the case where \texttt{age} feature is dropped, we can check out both code and data to $V_5$ and continue from there.

\paragraph{Checking Out Past Data} Users can safely check out a data state from a past commit (i.e., rollback execution). For example, in Fig~\ref{fig:consistent_checkouts}b, at $V_9 = (C_9, D_9)$, the data scientist checks out the data state to $D_1$ and keeps the code state. Checking out only data generates a new version $(C_9, D_1)$. With $C_9$ kept there, the user does not need to rewrite cells that are shared between different explorations (e.g., data splitting, model training, and model evaluating cells in Fig~\ref{fig:consistent_checkouts}b), while being able to modify other cells.

\begin{theorem}
    If $V_k = (C_i, D_j)$ is consistent, the version after checking out a past data $V_{k'} = (C_i, D_{j'})$ is consistent.
\end{theorem}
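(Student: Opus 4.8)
The plan is to exploit the structural fact that ``checking out past data'' holds the code state fixed and replaces $D_j$ by an \emph{ancestor} data state on the same data lineage. The only property I need from the word ``past'' is that the new execution history $H_{j'}$ is a \emph{prefix} of the old one $H_j$. This follows from the data model: every execution only appends one cell to the history ($H_{j'} = H_j \oplus [c]$), so following the data-parent relation backward from $D_j$ to $D_{j'}$ strips cells off the tail, giving $H_{j'} = [h_j[1], \dots, h_j[m]]$ for some $m \le |H_j|$. I would state this prefix relationship as the first step, emphasizing that it is exactly what separates a safe past-data checkout from the problematic ``future data'' and ``unrelated data'' checkouts of \S\ref{sec:version-problem}, where no such prefix relation holds.

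Given the prefix fact, the remainder is a direct verification of the consistency condition for $V_{k'} = (C_i, D_{j'})$. First I would note that the code state is unchanged, so for every cell $c$ the value $\text{Output}(c)$ is identical in $V_k$ and $V_{k'}$. Next I would take an arbitrary executed cell $c \in C_i \cap H_{j'}$ and let $n$ be its position, $h_{j'}[n] = c$. Because $H_{j'}$ is a prefix of $H_j$, we have both $c = h_j[n]$ (hence $c \in C_i \cap H_j$) and the equality of prefixes $[h_{j'}[1], \dots, h_{j'}[n]] = [h_j[1], \dots, h_j[n]]$, so $\text{Exec}$ returns the same value on either list. Applying the consistency of $V_k$ to the cell $c$ at position $n$ then yields $\text{Output}(c) = \text{Exec}([h_j[1], \dots, h_j[n]]) = \text{Exec}([h_{j'}[1], \dots, h_{j'}[n]])$, which is precisely the consistency requirement for $V_{k'}$. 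Since $c$ was arbitrary, $V_{k'}$ is consistent.

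I expect the main obstacle to be conceptual rather than computational: pinning down the formal meaning of ``past data'' so that the prefix property provably holds, and confirming that the index $n$ of a cell is preserved when passing between $H_{j'}$ and $H_j$ (the definition is slightly delicate if a cell is re-executed and thus appears at several positions). Once the prefix structure is fixed, the manipulation of $\text{Exec}$ and $\text{Output}$ is immediate. As a sanity check I would verify that the argument fails exactly where it should: if $D_{j'}$ were a descendant or a sibling of $D_j$, then $H_{j'}$ would \emph{not} be a prefix of $H_j$, the prefix-equality step would break, and consistency would no longer follow, matching the inconsistent cases this theorem is designed to exclude.
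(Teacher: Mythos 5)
Your proposal is correct and follows essentially the same route as the paper's proof: both hinge on the observation that a past data state's execution history $H_{j'}$ is a prefix of $H_j$, so every executed cell $c \in C_i \cap H_{j'}$ also lies in $C_i \cap H_j$ at the same position, and the chain $\text{Output}(c) = \text{Exec}([h_j[1], \dots, h_j[n]]) = \text{Exec}([h_{j'}[1], \dots, h_{j'}[n]])$ transfers consistency from $V_k$ to $V_{k'}$. Your version merely spells out more explicitly why the prefix property follows from the append-only data model, which the paper asserts without elaboration.
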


\begin{proof}
    A past data state $D_{j'}$ contains an execution history that is a prefix of the current one, $H_{j'} \subseteq H_j$. 
    For all executed cells in the checked-out version $c \in C_i \cap H_{j'} \subseteq C_i \cap H_j$, its output matches with the execution result:
    \begin{align*}
        \text{Exec}&([h_{j'}[1], h_{j'}[2], \dots, h_{j'}[n]]) \\ 
        &= \text{Exec}([h_j[1], h_j[2], \dots, h_j[n]]) \\
        &= \text{Output}(c) \qedhere
    \end{align*}
\end{proof}

\fix{Note that checking out past data is safe even when executing modified cells. For example, when one modifies an existing cell $c$ into $c'$ to create $V_{10} = (C_{10}, D_9)$ so that $c \notin C_{10}$ and $c' \in C_{10}$, executes $c'$ to create $V_{11} = (C_{11}, D_{11})$, and checks out past data to $D_9$, the resulting version $(C_{11}, D_9)$ remains consistent regardless of the modified cell because the modified cell is no longer an executed cell $c' \notin H_9$ and the original cell is no longer an existing cell $c \notin C_{11}$.}

\begin{figure*}[t]
    \centering
    \begin{subfigure}[b]{0.45\textwidth}
        \centering 
        \includegraphics[width=\linewidth]{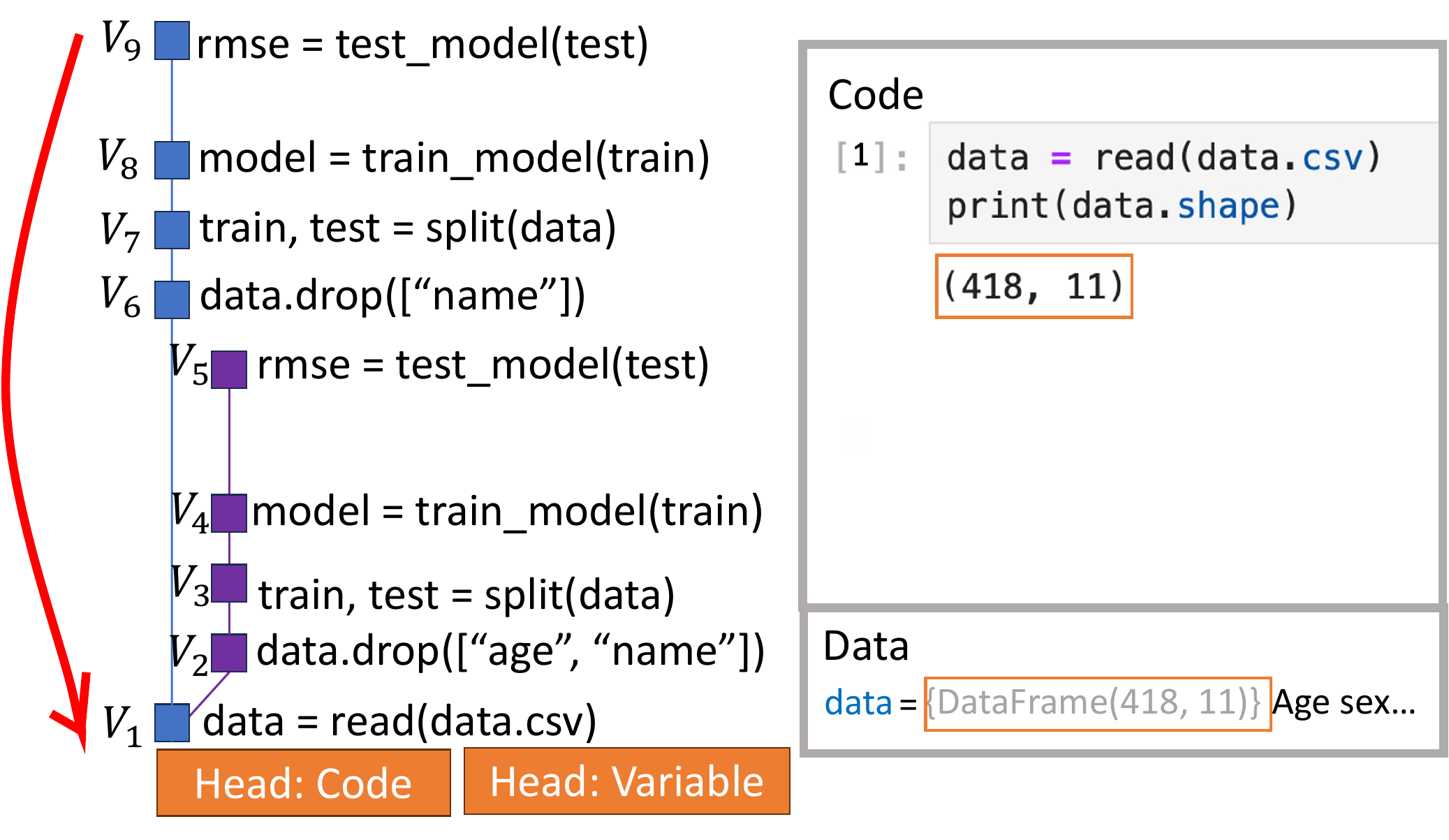}
        \vspace{-4mm}
        \caption{Checkout both code and data.}
        \label{fig:consistent_checkout_both}
    \end{subfigure}
    \qquad
    \begin{subfigure}[b]{0.45\textwidth}
        \centering 
        \includegraphics[width=\linewidth]{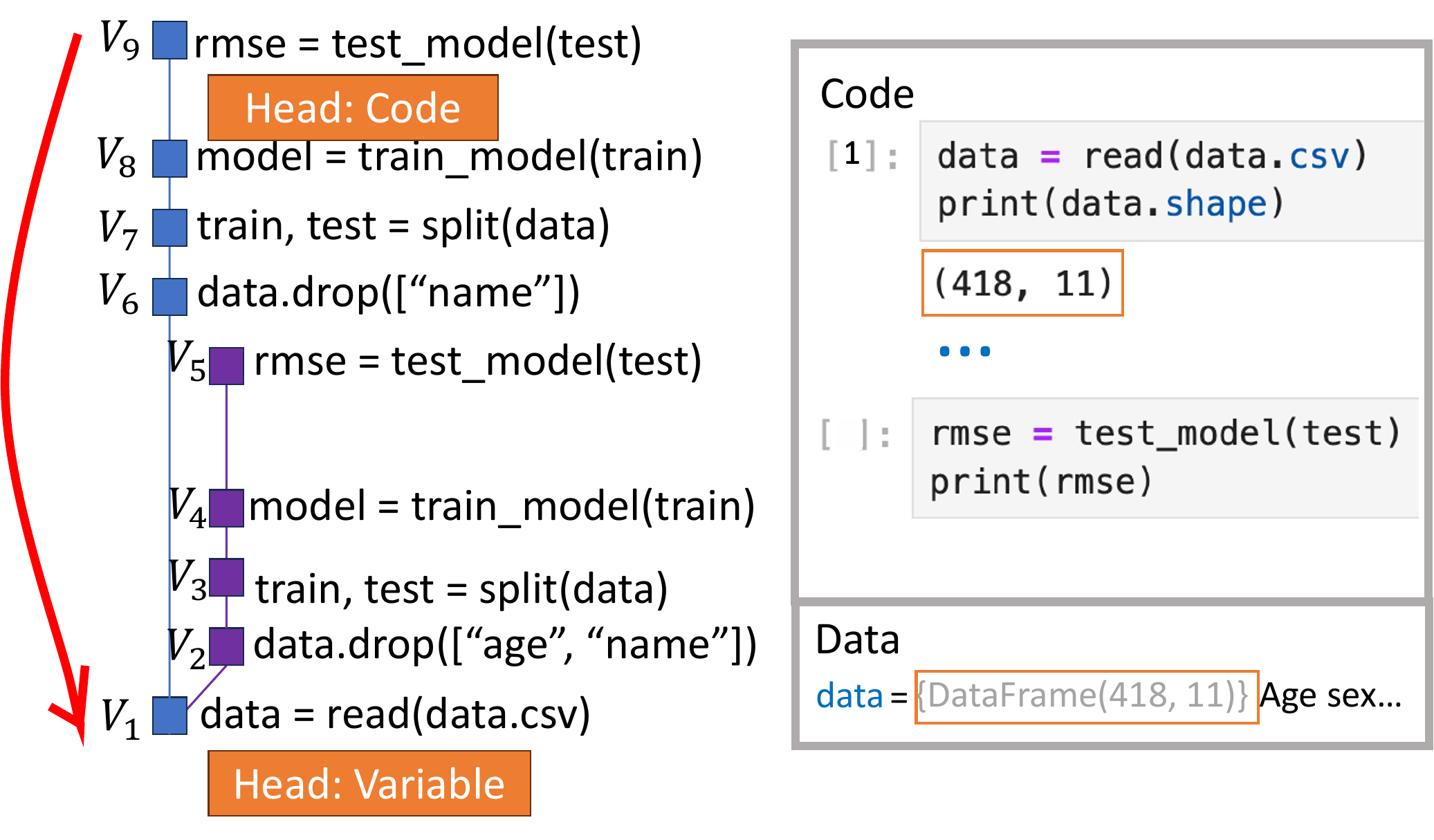}
        \vspace{-4mm}
        \caption{Checkout past data.}
        \label{fig:consistent_rollback_var}
    \end{subfigure}
    \hfill

    \vspace{-2mm}
    \Description{The figure illustrates two scenarios of consistent checkouts: Checkout both code and data and Checkout past data.}
    \caption{\fix{Examples of two types of consistent and safe checkouts. Note that checking out past data rewinds execution number(s).}}
\label{fig:consistent_checkouts}
\end{figure*}



\section{User Study Design}
\label{sec:evalmethod}

We aim to answer two core questions: 
    (1) Does \system improve productivity in data science tasks? 
    (2) Are the \system's UI designs intuitive and easy to interact with?
We study these with a user study with 20 students. 
These students are randomly assigned to the experimental group with \system and 
    the control group without \system 
    (\S\ref{sec:evalmethod-participants}). 
For both groups,
we provide an identical testing environment with 
    the same tutorials, data science tasks, reference code, and instructions (\S\ref{sec:evalmethod-materials}). 
We observe how the users perform the task, measure the time they spend on each part, 
    and collect feedback through exit surveys (\S\ref{sec:evalmethod-procedure}). 
Lastly, we discuss our efforts to minimize the gap between
    this lab experiment and real-world data science tasks (\S\ref{sec:evalmethod-challenges}). The University Ethics Review Board approved human subject experiments and the formative study.

\subsection{Participants}
\label{sec:evalmethod-participants}

We recruited 20 students from a course covering the intersection of machine learning and data management: 1 undergraduate student and 19 graduate students. 
The same amount of extra credits was given 
    as an incentive for participating in this study,
    regardless of their performance.
With ``Artificial Intelligence'' as a course prerequisite, 
all participants had backgrounds in data science workflows 
    such as  extract-transform-load, 
    feature engineering, model training, etc. 
Also, all the participants already had completed 
    course assignments related to Jupyter and Python.

The participants were randomly assigned into 4 (sub-)groups as in~Table~\ref{tab:groups}. 
All four groups performed the same tasks,
    but differed only in the use of \system and workload scales.
The only difference between lightweight and compute-intensive tasks
    was artificially injected delays for certain cell executions 
        (e.g., data loading, training models);
    however, the contents of the datasets and trained models were exactly the same.


\begin{table}[t]
\caption{Participant group assignments. 
    20 participants were randomly
        assigned to one of four subgroups.
    } 
    \label{tab:groups}

\vspace{-2mm}

\centering
\small
\begin{tabular}{lll}
    \toprule
    \textbf{Group Name} & 
    \textbf{\CellWithForcedBreak{\system \\Used?}}  & 
    \textbf{\CellWithForcedBreak{Workload Scale (Running the \\entire notebook end-to-end takes)}} \\
    \midrule
    KL & Yes & Compute-intensive \quad (332 secs) \\
    CL & No (control) & Compute-intensive \quad (332 secs) \\
    KS & Yes & Lightweight \quad (80 secs) \\
    CS & No (control) & Lightweight \quad (80 secs) \\
    \bottomrule
\end{tabular}
\end{table}



\begin{table*}[ht]
\centering
\footnotesize
\caption{Our strategies to close the gap
    between real-world data science and our user studies.}
\label{table:constraints}
\vspace{-2mm}

\begin{tabular}{p{18em} p{15em} p{25em}}
\toprule
\textbf{Real-world data scientists} & 
\textbf{Discrepancy from Real-world} & 
\textbf{Our user study design to close the gap}                                  \\ \midrule
Write notebook code by themselves & 
    Code is provided by us & 
    Multiple code snippets to choose from~(\S\ref{sec:evalmethod-materials})
    \\[0.5em]
May have backgrounds with their tasks & 
Limited time to understand our tasks &  
Tutorial notebooks in the same structure with easier tasks~(\S\ref{sec:evalmethod-materials})
    \\[0.5em]
More familiar with Jupyter and other tools & 
No experience with \system & 
Tutorials have step-by-step instructions for \system~(\S\ref{sec:evalmethod-materials})
    \\[0.5em]
Test various models and datasets
    & Only a few scenarios can be tested
    & ``compute-intensive'' and ``lightweight'' tasks for diversity
    (\S\ref{sec:evalmethod-procedure})
    \\[0.5em]
    \bottomrule
\end{tabular}
\end{table*}


\subsection{Environments, Training, and Task Design}
\label{sec:evalmethod-materials}

\paragraph{\textbf{Environment}} We pre-installed both Jupyter and \system 
    on the testing laptop. 
The laptop was a MacBook Pro with 16 GB memory and a 2.3 GHz Intel Core I7 processor. 
We extended the laptop with an LG 34-inch 21:9 UltraWide 1080p monitor for the user study. 
We 1:1 split the screen into two parts, 
    one for Jupyter and the other for \system;
    participants could see both simultaneously.

\paragraph{\textbf{Introduction Video}} 
We prepared a 4-minute video to introduce \system.
The video gave users a basic idea of \system and how it worked. 
Users in KS and KL watched this video at the beginning; 
those in CS and CL watched it after the formal tasks.

\paragraph{\textbf{Tutorial}} 

Tutorial notebooks were prepared
    to help participants recall Jupyter (for all groups) and 
        try \system before starting 
            actual (i.e., formal) tasks
            (only for KL and KS groups).
The tutorial notebooks had a structure similar to the formal tasks
    for an easier transition,
but their semantics and difficulty were different.
For \system groups (KL and KS), tutorial notebooks had extra modules
    for trying \system beforehand.
This trial session was less than five minutes;
    thus, their exposure to \system was relatively limited
    (compared to Jupyter and Python).

\paragraph{\textbf{Formal Tasks}} 
We prepared two sets of tasks: Workflows I and II.
Each workflow contained multiple (smaller) tasks.

\vspace{0.3cm} 
\noindent\textsf{Workflow I: Build data science models.}  \;
    This mimicked how data scientists would build models to predict Titanic fatalities. 
    Users were instructed to follow a typical data science workflow and choose the methods from \fix{provided code snippets} for data loading, visualization, feature engineering, model training, and evaluation.

    \begin{itemize}
    \item \emph{Task 1: Try alternatives} \;
    We asked to change feature engineering methods and retrain models. 
    This task requires nonlinear exploration because
        some features were dropped previously. 
    \fix{This task corresponds to how F2, F4, F5, and F6 use notebooks to compare different alternatives.}
    
    \item \emph{Task 2: Report old \& new values} \;
    We asked to report old and new model accuracies (RMSE). 
    \fix{This task also corresponds to how F2, F4, F5, and F6 compare alternatives.}
    
    \item \emph{Task 3: Retrieve old variable} \;
    Finally, we asked to export both old and new models for future inspection. \fix{This task corresponds to pain point P2 (Cannot restore overwritten data).}
    \end{itemize}
\vspace{0.3cm} 
\noindent\textsf{Workflow II: Recover work from bugs and system crashes.} \quad 
    While sequentially executing the code blocks in a notebook, 
    participants would find that a bug has corrupted \texttt{data} variable. 

    \begin{itemize}
    \item \emph{Task 4: Identify bug} \;
    We asked to find the root cause of the bug. \fix{This task corresponds to pain point P3 (Hard to debug).}
    
    \item \emph{Task 5: Recover from restart} \;
    After fixing the bug, they were asked to restart the kernel.
    This resembles a system crash.
    Participants were then asked to restore their work. \fix{This task corresponds to pain point P1 (Session suspension deletes data).}
    \end{itemize}
\vspace{0.3cm}

To make the notebooks realistic, all code snippets were taken from Kaggle prize-awarded notebooks, 
with minor modifications of variable names to ensure compatibility.
Users were asked to open and read instructions for one task at a time.
This was to simulate the situation
    where they progressively develop their notebook.

Providing code snippets helps us evaluate the impact of our framework (\system) in a more controlled manner. 
It minimizes the experimental variance coming from participants with different backgrounds (e.g., different levels of familiarity with data science and Python).
We note that although snippets are fixed, participants are free to edit them and explore data in any way combining both conventional copy/paste and \system's checkout and rollback.

\subsection{Procedure}
\label{sec:evalmethod-procedure}

\paragraph{Experiment Groups w/ \system (KS, KL)} 

We first presented a \system introduction video. 
Participants were then asked to study tutorial notebook. 
Later, they worked on the two formal task notebooks. 
Lastly, they filled out an anonymous survey.
\\[-17pt]
\paragraph{Control Groups w/o \system (CS, CL)} 
Instead, participants were first asked to study the tutorial notebook. 
Then, they were instructed to work on the two formal task notebooks. 
Afterward, we presented \system introduction video to participants. 
Lastly, they filled out an anonymous survey.

\vspace{3mm}
\noindent
While the users were working on the formal tasks, 
    we measured the time they spent on each task. 
The user study lasts for about an hour across all participants.

\subsection{For realistic data science environments}
\label{sec:evalmethod-challenges}

While our experiment setup may not exactly replicate real-world data science workflows,
    we employ various mitigation strategies to close the gap.
Table~\ref{table:constraints} describes
    potential differences between real-world data science and our lab environment 
        (the first and the second columns),
    and present our strategies (in the third column). 
For example, real-world data scientists would 
    convert their high-level goals to code by themselves.
In contrast,
    our user study provided participants with
        reference code snippets
        to help them complete tasks in a limited amount of time (i.e., one hour).
To narrow this gap,
    we provided multiple alternative code snippets.
Participants could mimic the process of "writing code" with "their own choice."
Another limitation would be users' lack of experience with \system. 
To mitigate this, we provided training with a tutorial notebook.



\begin{figure*}[ht]
    \begin{subfigure}[b]{0.47\textwidth}
        \includegraphics[]{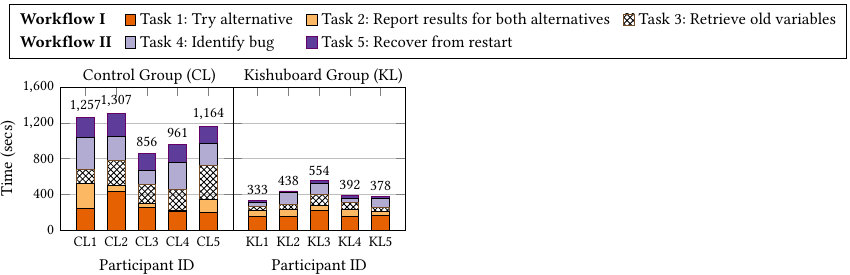}
        \vspace{-2mm}
        \caption{Performance on compute-intensive workloads}
        \label{fig:time_l}
    \end{subfigure}
    \begin{subfigure}[b]{0.39\textwidth}
          \includegraphics[]{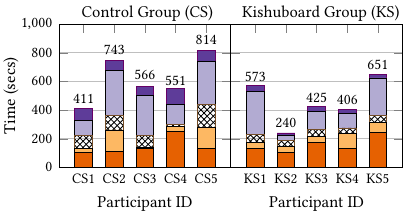}
        \vspace{-2mm}
        \caption{Performance on lightweight workloads}
        \label{fig:time_s}
    \end{subfigure}

    \vspace{-2mm}
    \Description{
    The figure consists of two bar charts comparing the performance of participants in two different groups, "Control Group" and "\system Group," under different workflows for both compute-intensive and lightweight workloads. Each group is assessed based on the time taken (in seconds) to complete various tasks within two workflows.
    }
    \caption{User performance on Tasks 1--5.
    \system helped participants complete tasks more quickly.
    The improvements were more significant for compute-intensive workloads (left).
    For lightweight workloads (right), 
    some people could quickly achieve nonlinear exploration
        by re-running code from scratch,
        reducing the advantage offered by \system.
    }
    \label{fig:time_comparison}
\end{figure*}


\section{User Study Results}
\label{sec:eval}

We conduct a user study to evaluate \system as a nonlinear interactive data science tool. The study shows:
\begin{enumerate}
    \item The proposed code+data space versioning boosts productivity. 
    \system helped users complete tasks more quickly. 
    The improvements were more significant ($62\%$ faster on average) for compute-intensive workloads. For lightweight workloads, the improvement is $25.61\%$ on average.~(\S\ref{sec:eval-productivity})
    \item \system's features are useful for data science workflow. 
    All the features (i.e., checkout code+data, checkout past data, locate commit by exec counter, locate commit by searching, browse exec history and variable info) 
    were considered ``very useful'' or ``useful'' and were used by most participants.~(\S\ref{sec:eval-features})
    \item Our proposed UI design is easy to use but has room for improvement. 
    Most considered it easy to browse commits, locate the ones they needed, 
        trace the variable changes, 
        and understand execution history.~(\S\ref{sec:eval-ui_easiness})
    \item Participants were willing to continuously use or 
        even pay. 
        19/20 participants were "willing" or "very willing" to use. 
        13/20 participants were willing to pay at least \$5 a month.~(\S\ref{sec:eval-willingness})
\end{enumerate}

\subsection{Code+data versioning boosts productivity}
\label{sec:eval-productivity}

We quantify \system's productivity benefits
    by comparing the performance of \system groups (i.e., KL and KS)
        and that of the control groups (i.e., CL and CS) on
        five different tasks (i.e., Tasks 1--5).
The results are shown in Fig~\ref{fig:time_comparison}. 
We observe that \system offered significant productivity benefits, especially 
    when the workload was compute-intensive (Fig~\ref{fig:time_comparison}a).
This was because \system allowed users to explore alternative paths
    without re-executing some of the code blocks.
They could directly check out or roll back executions
    to retrieve any previous data and resume from that point.
This time-saving benefit was lower for lightweight tasks, unexpectedly,
    because there is less time penalty in manually simulating nonlinear exploration.
The time-saving benefit was the biggest for Task 5 (Recover from restart) 
    because re-executing every cell from scratch is time-consuming.
Such a manual restoration strategy can only work for small-scale workflows with a relatively short history
    (like in our user study),
which will become extremely challenging in real-world data science.

The average treatment effect was statistically significant even considering the variance
    in individual performance.
Specifically, the overall performance (i.e., the sum of the times spent on Tasks 1--5) of 
    the KL group was higher than the CL group with a p-value $4.17 \times 10^{-5}$;
    the performance of KS was higher than CS with a p-value of $7.9 \times 10^{-2}$.
We also manually confirmed, based on video recordings of user interactions,
    that those performance variances were caused by natural data exploration, 
        not by any unexpected reasons (e.g., system crashes).
For example, in Task1 (Try alternative), KS4 and KS5 lost their code by checking out both code and variable instead of checking out only data. 
As a result, they needed to re-choose the same code twice. 
We also observed that different CS and CL participants 
    selected different strategies to perform their tasks;
    this had an interesting implication that
        the additional data operations they performed on an earlier task
        often aided their later tasks.
For example, CS4 was relatively slow in Task 1 but was much faster in Task 2 and Task 3. 
It was because this user duplicated every cell and renamed every variable to avoid overwriting, 
    which slowed down trying model alternatives but simplified model comparison and retrieving old models.
Some other participants, like CL2, CL3, CS1, and CS3, did not overwrite the old RMSE 
    when printing the new one by duplicating just the model evaluation cell, 
        helping them complete Task 2 quickly.

\fix{We acknowledge that, in our user study, users may spend relatively smaller amounts of time writing and editing code compared to real-world data exploration. 
If users spend significant time writing code, it could reduce the \system's benefits in end-to-end execution times.
Nevertheless, \system can offer significant productivity benefits in several ways. 
Users can easily retrieve mistakenly overwritten code from checkpoints
    and reuse code from different branches without painstakingly rewriting code 
        and managing multiple exploration paths. }

\subsection{Our features are useful for data science}\label{sec:eval-features}

\begin{figure*}[t]
    \centering
    \begin{subfigure}[b]{1.0\linewidth}
        \centering
    \includegraphics[]{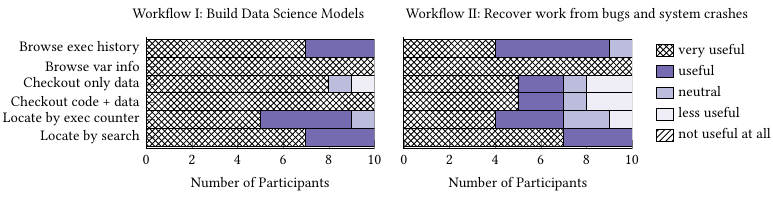}      
        \vspace{-2mm}
    \end{subfigure}
    \\
    \Description{
    The figure shows two stacked bar charts representing participants' perceived usefulness of different functionalities across two workflows.
Chart 1: Workflow I – Build Data Science Models
The chart on the left shows how participants rated the usefulness of various functionalities while building data science models:
1. Browse exec history: Most participants (around 8) found it "very useful," with a few finding it "useful."
2. Browse var info: Around 7 participants found it "very useful," and about 2 rated it as "useful."
3. Checkout only data: Most participants rated it as "neutral," with fewer ratings as "less useful" or "useful."
4. Checkout code + data: Participants found it mixed, with most being "useful" and some "very useful" or "neutral."
5. Locate by exec counter: More participants found it "very useful" or "useful."
Locate by search: A similar trend, with more participants considering it "very useful" or "useful."
Chart 2: Workflow II – Recover Work from Bugs and System Crashes
The chart on the right shows participants' ratings for the same functionalities when recovering work from bugs and system crashes:
1. Browse exec history: Similar to Workflow I, most participants found it "very useful."
2. Browse var info: The usefulness rating is more varied, with participants evenly split across "very useful," "useful," "neutral," and "less useful."
3. Checkout only data: Most participants found it "neutral" or "less useful."
4. Checkout code + data: The distribution is similar to "Browse var info," with a few more participants finding it "useful."
5. Locate by exec counter: The majority of participants found it "very useful" or "useful."
Locate by search: It shows a broad distribution, but most participants rated it "very useful" or "useful."
    }

    \vspace{-2mm}
    \caption{Participants' perceived usefulness of each feature across different workflows. All the features are considered "very useful" or "useful" by most participants.}
    \label{fig:feature_usefulness}
\end{figure*}
\begin{figure*}[th]
    \centering       
    \includegraphics[width=0.75\textwidth]{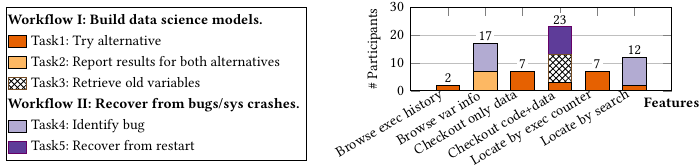}

    \vspace{-4mm}
    \Description{The chart shows that different features were utilized by varying numbers of participants depending on the specific task within each workflow. For example, "Locate by exec counter" was the most frequently used feature, particularly for Task 3: Retrieve old variables, while "Browse var info" was primarily used for Task 4: Identify bug.}
    
    \caption{
    How frequently our features were used.
    All features are used, with different tasks focusing on different ones.}
    \label{fig:usage_frequency}
\end{figure*}

\subsubsection{Participants' direct feedback was overall positive}

We asked participants (in KL and KS) 
    to rate the usefulness of each \system's feature.
The results are reported in Fig~\ref{fig:feature_usefulness}. 
Most users found both ``checkout only data'' and ``checkout both code and data'' to be ``very useful'' when building data science models. 
This is consistent with their behaviors, as we describe shortly. 
Also, most users thought that ``searching'' and ``using execution counter to locate commit'' were useful for finding a commit. 
Participants also mentioned as additional notes that checkout could be useful in their real use cases. 
One person wrote: when they ``want to try out different ways'' or ``find myself executed some wrong code messing up the variable states'', checkout helps them to ``not losing data and code from the previous model'' and ``not need to re-execute from scratch.'' 

Search was marked by 7 users as ``very useful'' for the debugging workflow (second notebook), 
    where they needed to understand how variables evolve. 
According to our observation, when users knew the execution counter for their target commit
    (i.e., an increasing sequence number assigned to each executed cell to indicate
        completed cells), 
    they could locate it more easily (Task 1).
Otherwise, they had to search for the commit they wanted using some criteria (e.g., in Task 4, search by variable changes). 
In addition, a user noted: ``in my use of Jupyter notebooks it becomes hard to find a particular cell in a large notebook, in particular when I use Jupyter notebooks for running code generation models in Google [Colab]. Searching for a particular cell would be my best real use case.''

\subsubsection{\system's features were all frequently used}

In addition to directly asking for ratings,
    we also examined the actual uses of various features offered by \system.
For this test, we manually revisited the screen recordings 
    of participant interactions with \system.
Fig~\ref{fig:usage_frequency} shows how many times each feature was explicitly used
    by the participants.
Seven of the ten participants from the \system groups used both ``checkout past data'' and ``checkout both code and data'' in the ``Build data science models'' workflow. 
Specifically, most participants used "checkout only data" for Task 1
    to start a new exploratory branch while keeping the code.
Everyone used ``check out both code+data'' for Task 3 to hop between branches. 
This result confirms that \system offers the types of navigation that could not be achieved by
    one-dimensional versioning,
    and that different kinds of two-dimensional navigation were employed
        freely by participants to boost their performance.

\subsection{Our proposed UI design is easy to use}
\label{sec:eval-ui_easiness}

We asked the participants about how easy it was to use each UI component (Fig~\ref{fig:eval-ui_easiness}). 

First, most features were rated ``very easy'' or ``easy'' to use from at least 80\% of the users. \fix{For example, although in the experiment, each user created 20 commits on Workflow I and 34 commits on Workflow II on average, 9/10 think it's easy to locate the commit they need, which shows the scalability of the UI design.}
Second, 60\% of the respondents felt ``neutral'' or ``hard'' to understand the difference 
    between dashed lines and solid lines,
        which represent code and data parents in the history graph.
We believe we can largely address this usability issue
    by better highlighting in our tutorial how notebook states may proceed.

\fix{

}

Some participants also commented on potential UI improvements, 
    especially in the search bar, display, and integration.
One respondent suggested extending the search query language 
    to support deeper semantics such as attributes/fields of variables.
Another participant suggests integrating \system into Jupyter.

\begin{figure*}[t]
    \centering
    \begin{subfigure}[b]{1.0\linewidth}
        \centering
            \includegraphics[width=0.75\textwidth]{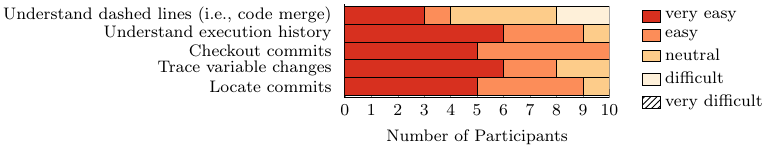}
        \vspace{-2mm}
    \end{subfigure}
    \vspace{-4mm}
    \caption{Participants' perceived easiness to use/understand UI design. Except for the "dashed line," all the other designs are considered "easy" or "very easy" to use/understand by most users.}
    \Description{
    The figure shows a stacked bar chart depicting participants' perceived easiness of using or understanding various aspects of a user interface (UI) design. Except for the "dashed line," all the other designs are considered "easy" or "very easy" to use/understand by most users
    }
    \label{fig:eval-ui_easiness}
\end{figure*}

\subsection{People are willing to use or even pay}
\label{sec:eval-willingness}

\begin{figure*}[t]
    \centering
    \begin{subfigure}[b]{.48\textwidth}
        \centering
            \includegraphics[width=0.9\textwidth]{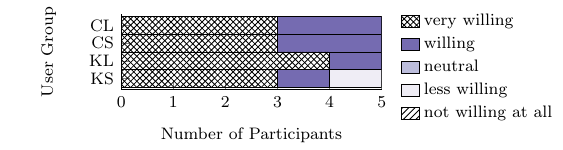}
        \vspace{0mm}
        \caption{Participants’ willingness to \textbf{USE} \system}
        \label{fig:willingness_use}
    \end{subfigure}
    \;\;
    \begin{subfigure}[b]
    {.37\textwidth}
                    \includegraphics[width=0.8\textwidth]{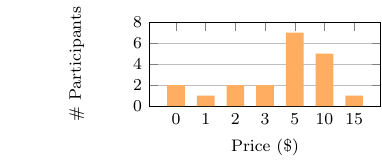}
        \vspace{0mm}
        \caption{Participants’ willingness to \textbf{PAY} for \system}
        \label{fig:willingness_pay}
    \end{subfigure}
    \Description{
    The charts illustrate that most participants are willing to use \system, particularly within the control groups. Additionally, the majority of participants are willing to pay up to \$5, indicating a general positive reception towards the potential pricing for \system.
    }

    \vspace{-2mm}
    \caption{User's willingness to use and pay for \system.  
    The figure shows most participants are willing to use and pay for \system 
        as an extension to Google Colab.}
    \label{fig:willingness}
\end{figure*}

We evaluated \system's overall usefulness by asking
    two types of questions: are you willing to \textbf{use} it in the future, and
        how much would you \textbf{pay} to use it?
        
Fig~\ref{fig:willingness}a presents users' willingness to use \system for their data science tasks. 
The results indicate that most participants were ``very willing'' and ``willing'' to use \system. 
One participant from the KS group said, ``I took the 'Machine Learning' last semester, in the MPs, I need to restart and re-execute to get over-written variables frequently. I wish I had this tool.'' 
Another participant from the CS group said, ``I used to be a data scientist in the industry, and I would say this tool definitely helps my work, as we need to bounce back and forth a lot.'' 

We also asked participants how much they would be willing to pay for \system 
    if, for example, it was a feature of Google Colab+ (Fig~\ref{fig:willingness}b). 
For those who gave a price range as their answer (e.g., \$5-\$10), we report the smaller amount (e.g., \$5). 
The most common choice was \$5/month, with 7 participants indicating the amount. 
Interesting, there were 6 participants willing to pay at least \$10/month for \system. 
Among the two participants who weren't willing to pay for \system, one explained 
``I would not be willing to pay. I already subscribe to Google Colab Pro, and 
    if \system was added as a premium feature it would entice me to continue paying the subscription fee.'' 
From this comment,
    we believe the participant's unwillingness could be partly attributed 
        to the upselling pricing scheme rather than a lack of inherent value in \system.
The particular product name ``Google Colab'' was mentioned in our user study
    to aid understanding;
    however, this project is not affiliated with or sponsored by any for-profit organization.

\fix{With the experiment result, we acknowledge potential biases in self-reported metrics, but we also note that our work offers well-regarded "checkpointing" (e.g., code version control) to a novel domain of data science by noticing the needs of two-dimensional code+data versioning --- a feature well-received by users. }




\section{Related Work}
\label{sec:related}


\begin{table*}[th]

\centering
\caption{Comparison between \system and other systems for supporting non-linear exploration.}
\label{difference_table}

\vspace{-2mm}
\small
\begin{tabular}[H]{lp{80mm}}
\toprule
\textbf{Approach} & \textbf{Mechanism}                               \\ \midrule
Notebook version control system~\cite{head2019managing,kery2019towards,kery2018interactions,gitbook} & Record code version or execution order (cannot recover data). \\
Branched presentation of cells~\cite{ramasamy2023visualising,subramanian2019supporting,venkatesan2022automatic,wang2022stickyland,kery2017variolite} & Present branched structure of code (cannot recover data).\\
Fork kernel~\cite{weinman2021fork} & Fork kernel for alternatives (not scalable, cannot recover past data). \\
Record variable change history~\cite{wang2022diff,hohman2020understanding} & Visualize variables' evolution history (no checkout). \\
Checkpoint and restoration~\cite{chex,elasticnotebook,kishu,txnpython} & 
    Store data science objects/variables (not considering UI/UX). \\
\textbf{Ours (\system)} &
\textbf{Two-dimensional code+data space versioning}
\\
 \bottomrule
\end{tabular}
\end{table*}

This section lists related works in assisting data science workflow on computational notebooks (Table~\ref{difference_table}) \fix{as well as discussing their strengths and drawbacks.}

\paragraph{Notebook Version Control Systems} While Git~\cite{gitbook} can store and manage computational notebooks as source files, many version control systems are specialized for notebook workflows. Verdant~\cite{kery2019towards,kery2018interactions} \fix{records all the operations that happened in the history, like running cell and editing cells, etc., allowing users to browse previous activities and code versions}. \fix{Loops~\cite{eckelt2024loops} allows users to browse and compare the cell code and output for different exploration branches.} Code Gathering~\cite{head2019managing} allows users to automatically extract the smallest necessary code subsets to recreate their chosen outputs. Although these systems help users \fix{understand notebook editing history or} select the relevant code from a potentially messy notebook execution, they do not support recovering data, forcing users to re-execute the code to reach a desired kernel state.
\fix{
    \paragraph{Visualization of Notebook Branches} Another category of previous work helps users to distinguish cells for different branches~\cite{ramasamy2023visualising,wang2022stickyland, harden2023there, harden2022exploring,kery2017variolite}. MARG~\cite{ramasamy2023visualising} designs an index with a branched structure that helps users to quickly locate cells corresponding to the desired branch. Sticky land~\cite{wang2022stickyland} and 2D notebooks~\cite{harden2023there, harden2022exploring} breaks the linear presentation of notebooks and presents cells from different branches in a parallel view. In Variolite~\cite{kery2017variolite}, users can craft alternative code snippets and explore different combinations by selecting different snippet for each step. Like version control methods, these methods help users identify codes relevant to their current exploratory branch; however, to resume the work in a different branch, users still need to re-execute the code. However, from the visualization perspective, these works are orthogonal to \system, we believe they can be combined together so that the users can also see the branched structure of code directly in the notebook itself.
}

\paragraph{Fork Kernel} To support quick data recovery between different branches, Fork it~\cite{weinman2021fork} forks a new notebook for each alternative. However, this method has two problems. First, it is not scalable because forking the kernel per alternative can be expensive.
Second, it does not support forking from a past state; in other words, the user cannot revert the data state to try an alternative, so they must anticipate the forking beforehand.
In contrast, \system implements an incremental checkpointing as discussed in \cite{kishu} and empowers users to explore both past and alternative states.

\paragraph{Record Variable Change History} \cite{wang2022diff,hohman2020understanding} record the variable change history and support comparison between different variable versions. However, they offer no solution to recover the variable to resume notebook execution.
Beyond supporting variable comparison as well as searching for variable changes, \system also allows users to recover variables in their notebook sessions.

\paragraph{Checkpoint and Restoration} Although not the focus of this paper, efficient kernel session checkpoint and restoration techniques are used in our system. Recent advances~\cite{chex,elasticnotebook,kishu,txnpython} enable
    accurate and efficient dirty data identification
    for data science systems.
    The main idea is to identify dirty data
    by extending the standard serialization protocols
        without relying on centralized buffer pools. With this technique, Kishu~\cite{kishu} supports efficient incremental checkpoint and checkout for Python kernel session. 
The approaches can seamlessly integrate with existing notebook systems
        without requiring any intrusive changes. 




\section{Future Work}
\label{sec:future}
Future work should address the following research questions.

\paragraph{How to Support Collaboration of Exploratory Data Science?} Kishu-board is mainly designed for individual use; however, data scientists often synchronously edit the same notebook when collaborating with others~\cite{wang2019data,quaranta2022eliciting,li2023understanding}.
Some tools already support code collaboration~\cite{smith2021meeting} or online conversation during collaboration~\cite{wang2020callisto}. However, merging variables from different users remains an open problem.
This new workflow would simplify many use cases like trying a variable training from one branch in another branch.

\paragraph{How to Support Semantic Search of History Commit?} Currently, \system only supports searching previous commits by their attributes. However, sometimes the users would like to search semantically. For example, in a user study done by~\cite{kery2018story}, users have queries with deeper semantics like ``what was the state of my notebook the last time that my plot had a gaussian is the peak?'' Some work~\cite{li2021nbsearch} leverages advanced machine learning models to enable natural language search for cell code; however, it is unclear how to adapt the technique for semantics in both code and data.

\paragraph{How to Compare Alternatives at Scale?}
Comparison between alternatives is more challenging when facing a large number of choices. Currently, \system only supports diff between the two versions. However, \cite{liu2019understanding} that their participants need to tackle many alternatives, ranging from hundreds of alternative graphs to thousands of data alternatives. Future works need to automate these comparisons through techniques like clustering or text mining.



\section{Conclusion}
\label{sec:con}

In this work, we propose \system, a novel version control system for non-linear computational notebook versions where each contains both code and data states. Our goal is to accelerate data science workflow while keeping the UI intuitive. We interview industrial practitioners to suggest useful features. To achieve consistent code+data versioning while allowing separate checkouts, we formally define consistency and build-in safeguard for safe checkouts. To simplify and visualize \fix{code+data versions}, we introduce a ``git-tree-like'' commit history graph \fix{in one-dimensional axis system} equivalent to the original \fix{two-dimensional axis system}.  \system helps users locate relevant commits through automatic folding, search, and diff. We demonstrate that \system can improve users' efficiency in many exploratory data science tasks and that the UI designs are sound through a user study with 20 participants. Our qualitative feedback also shows that 95\% of the users are willing to use \system in their future work.





\bibliographystyle{ACM-Reference-Format}
\bibliography{refs}

\clearpage

\appendix


\section{Additional Features of \system}
\label{sec:others}

\subsection{Automatic Folding}\label{sec:others_fold}

\system generates a commit for every cell execution—often resulting in hundreds or thousands of commits per session. Displaying all of them at once can hinder browsing, searching, and understanding the overall work history.

To address this, \system uses an auto-folding algorithm (see S4 in Table~\ref{tab:suggestions}) to condense the commit history graph. As shown in Fig~\ref{fig:grouped_commit}, adjacent commits are grouped into a single "grouped commit" node that users can expand by clicking the node (the rectangle with a plus sign).

Auto-folding applies to all commits except for:
\begin{enumerate}
    \item \textbf{User-important commits}: Commits with user-defined tags or messages.
    \item \textbf{Topology-important commits}: Commits that are a leaf (no child), a branch (multiple children), a root (no parent), or a merge (multiple parents).
\end{enumerate}
Fig~\ref{fig:grouped_commit} illustrates which commits are folded and which remain visible, along with the reasons why.



\subsection{Commit Searching}\label{sec:others_search}
\begin{figure}[H]
    \includegraphics[width=.2\textwidth]{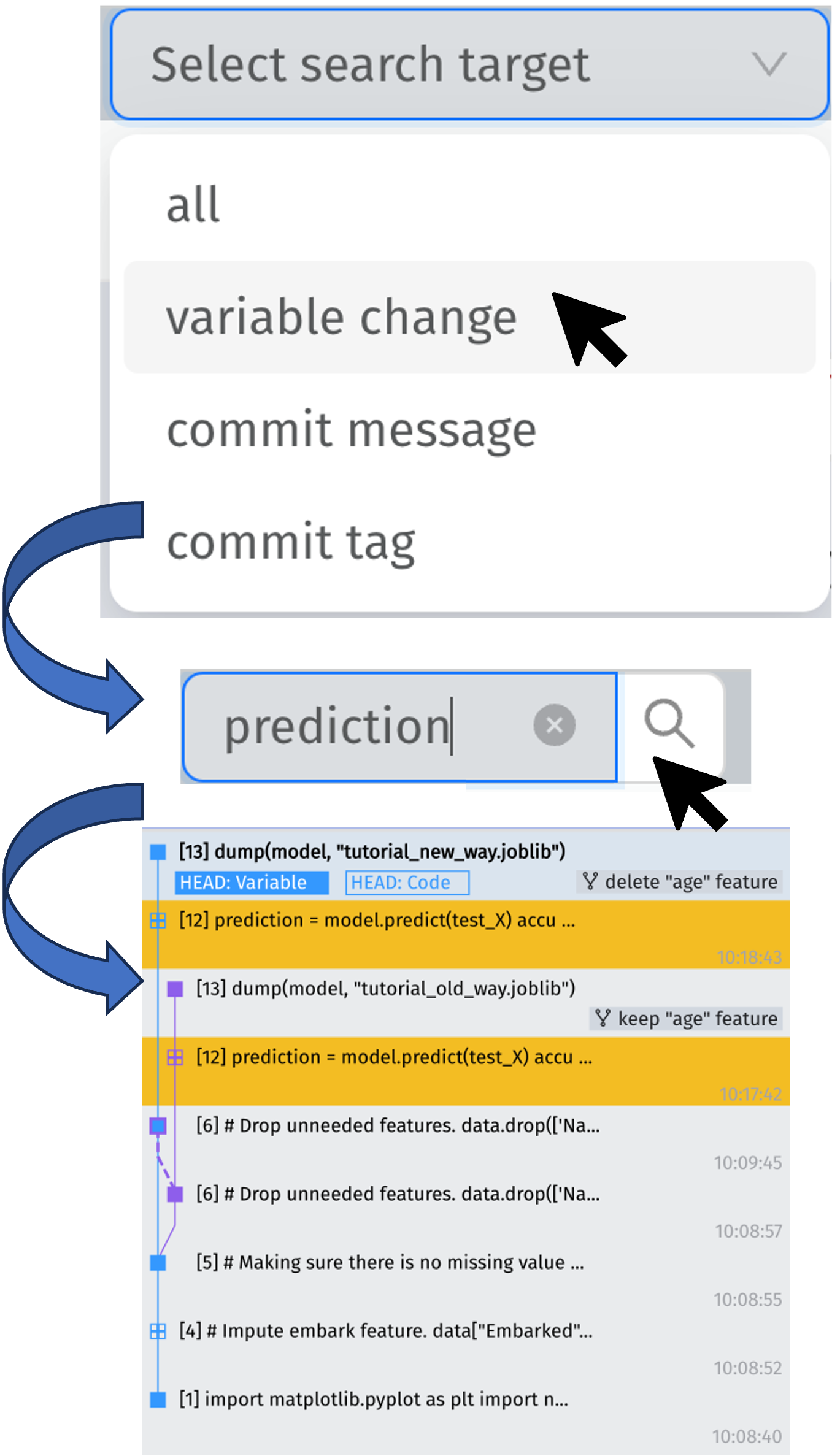}
    \label{fig:search}
    \caption{\system’s commit searching helps users find commits based on their attributes. Users can search for commits according to the commit message and tag. They can also search to highlight all the commits that change a specific variable's value. The figure shows an example of searching all the commits that changed the \textit{``prediction''} variable's value.}
    \Description{The figure demonstrates how to search commits}
\end{figure}


\system's commit searching helps the user find commits based on their attribute. Like other search engines for Git, users can search for commits that have similar tag names and branch names and/or commit messages. Given a search query, \system finds and highlights all matching commits in the commit history graph.

Beyond conventional searches, \system also supports \emph{searches by variable changes} (S5 from Table~\ref{tab:suggestions}). Given a variable name, Kishu-board finds all the commits that change the variable value.  For example, a user, Alice, may wish to understand how a training dataset \texttt{train\_df} has been preprocessed. Through \system, Alice can search by \texttt{train\_df} variable changes and find out about different variants of executed code cells that load the dataset, fill in missing values, and normalize the dataset across different experimental branches.




\subsection{Commit Diff}
\label{sec:others_diff}
\begin{figure}[H]
    \centering
    \begin{subfigure}[b]{\linewidth}
        \centering
        \begin{tabular}{cc}
            \toprule
            Variable Name & Latest Change At \\
            \midrule
            \texttt{data\_df} & $V_1$ \\
            \texttt{model} & $V_2$ \\
            \texttt{fig} & $V_3$ \\
            \bottomrule
        \end{tabular}
        \caption{Variable Version Table at commit $V_3$.}
        \label{fig:var_version_1}
    \end{subfigure}
    \begin{subfigure}[b]{\linewidth}
        \centering
        \begin{tabular}{cc}
            \toprule
            Variable Name & Latest Change At \\
            \midrule
            \texttt{data\_df} & $V_1$ \\
            \texttt{model} & $V_4$ \\
            \texttt{app} & $V_4$ \\
            \bottomrule
        \end{tabular}
        \caption{Variable Version Table at commit $V_4$.}
        \label{fig:var_version_2}
    \end{subfigure}
    \caption{Example variable version tables. Our system quickly diffs variables between any two commits by comparing their corresponding tables.}
    \Description{The figure demonstrates how to detect variable diff.}
    \label{fig:var_version}
\end{figure}

\system's commit diff compares the differences between two commits not only in terms of code but also data states.
Recall that, unlike source code files, a notebook contains a sequence of cells, which are strings of code. To compare code states between two commits, we adapt the existing method specialized for notebooks \cite{nbdime}, which detects both across-cell differences (e.g., added or deleted cells) and within-cell differences (e.g., changed lines of code).

Meanwhile, \system maintains an auxiliary structure called \textbf{variable version table} that keeps track of the variable version for each commit. Each entry of a variable version table is a variable name and the latest commit that changes it. Fig~\ref{fig:var_version_1} shows an example of a variable version table at commit $V_3$. Consequently, \system can quickly identify the difference based on the variable versions without reconstructing the variable values. For example, if a user compares the data state between $V_3$ and $V_4$, \system can lookup the two variable version tables and determine that \texttt{model} is different between the two commits, \texttt{fig} is deleted in $V_4$, and \texttt{app} is added in $V_4$. 

\begin{figure*}
    \centering
    \includegraphics[width=0.75\textwidth]{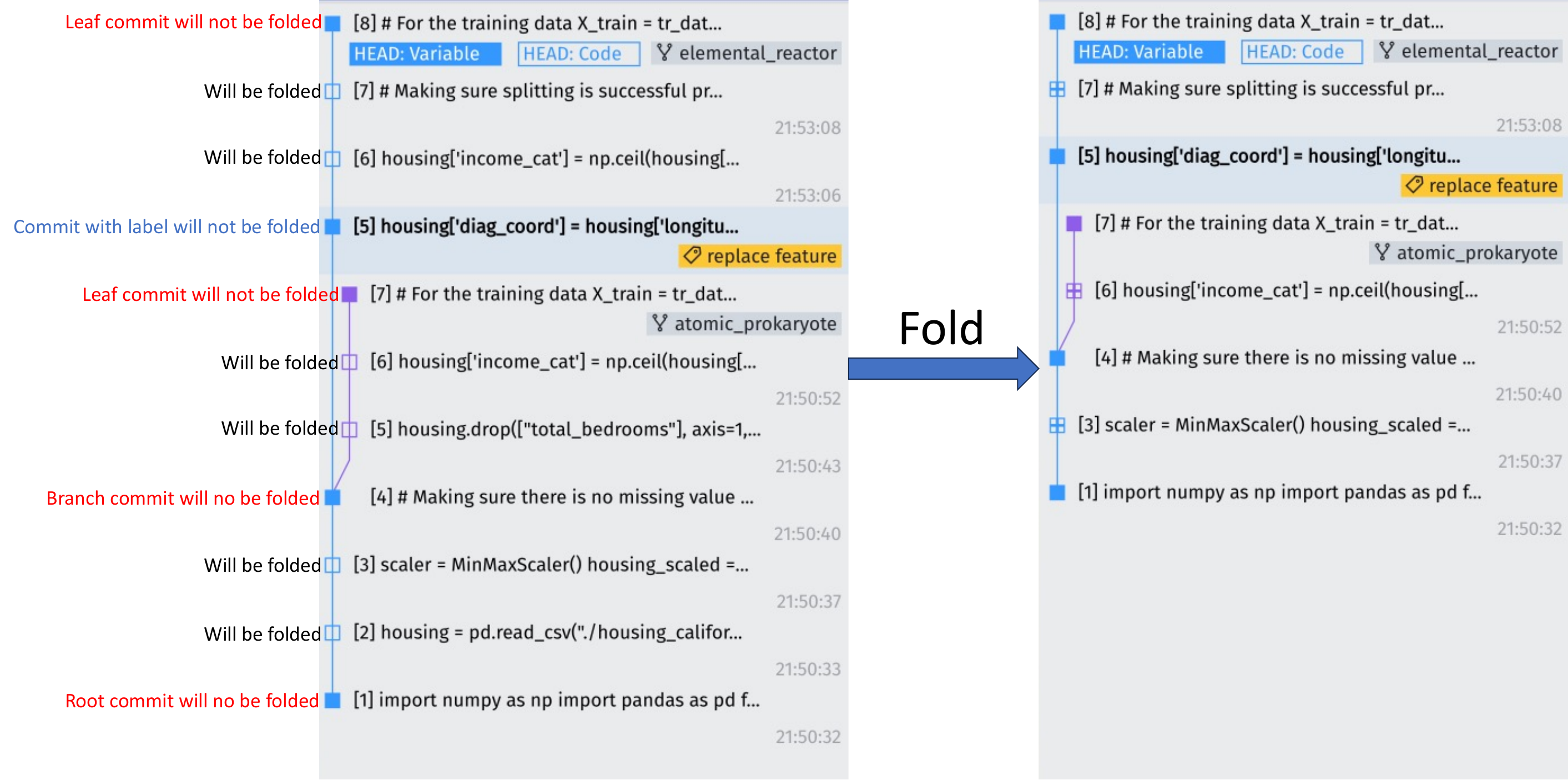}
    \caption{Demonstration of how auto folding works in commit history graph. Other than commits with topology importance (e.g., leaf commit) as indicated in red annotations and user importance (e.g., tag) as indicated in blue annotations, all the other commits can be folded into grouped commits.}
    \Description{
The image shows two panels:
Left Panel: A detailed commit history with several commits listed sequentially, no commit is folded. Each commit is accompanied by a small icon indicating whether or not they'll be folded.
Right Panel: A folded view.
    }
    \label{fig:grouped_commit}
\end{figure*}

\begin{figure*}
    \centering
    \includegraphics[width=0.9\textwidth]{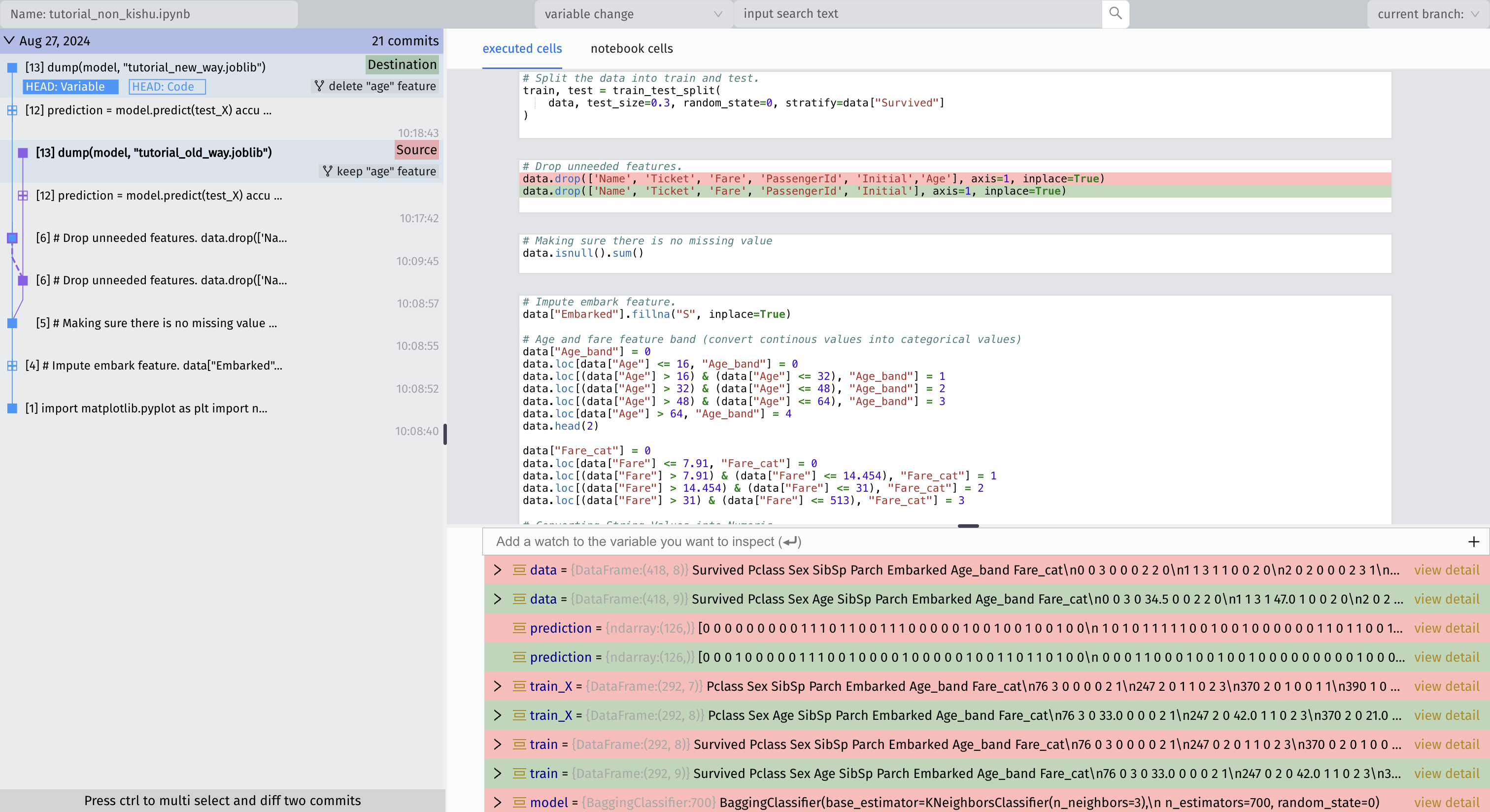}
    \caption{The figure demonstrates how to perform diffs between any two commits.}
    \label{fig:diff}
\end{figure*}



\end{document}